\documentclass[11pt]{article}

\usepackage[margin=1in]{geometry}
\usepackage{amsmath,amssymb,amsthm}
\usepackage{mathtools}
\usepackage{hyperref}
\usepackage{enumitem}
\usepackage{authblk}
\usepackage{physics}
\usepackage{cleveref}
\usepackage{bm}
\usepackage{algorithm}
\usepackage{algorithmic}
\usepackage{color}

\title{Universality of Classically Trainable, Quantum-Deployed Boson-Sampling Generative Models}

\newcommand{\aqa}{$\langle aQa^L \rangle$ Applied Quantum Algorithms, Leiden University, The Netherlands}
\newcommand{\liacs}{LIACS, Leiden University, Niels Bohrweg 1, 2333 CA, Leiden, The Netherlands}
\newcommand{\ENS}{DIENS, \'Ecole Normale Supérieure, PSL University, CNRS, INRIA, 45 rue d'Ulm, 75005 Paris, France}
\newcommand{\Wigner}{HUN-REN Wigner Research Centre for Physics, Budapest, Hungary}
\newcommand{\ELTE}{E\"otv\"os Lor\'and University,  Budapest, Hungary}
\newcommand{\UH}{University of Helsinki, Yliopistonkatu 4 00100 Helsinki, Finland}
\newcommand{\ALGO}{Algorithmiq Ltd, Kanavakatu 3C 00160 Helsinki, Finland}

\author[1,2]{Andrii Kurkin}
\author[3]{Ulysse Chabaud}
\author[4,5]{Zolt\'an Kolarovszki}
\author[4,5]{Bence Bak\'o}
\author[6,7]{Zolt\'an Zimbor\'as}
\author[1,2]{Vedran Dunjko}
\affil[1]{\aqa}
\affil[2]{\liacs}
\affil[3]{\ENS}
\affil[4]{\Wigner}
\affil[5]{\ELTE}
\affil[6]{\UH}
\affil[7]{\ALGO}
\date{\today}

\theoremstyle{plain}
\newtheorem{theorem}{Theorem}

\newtheorem{lemma}{Lemma}

\newtheorem{cor}[theorem]{Corollary} 

\theoremstyle{definition}
\newtheorem{definition}{Definition}

\theoremstyle{remark}
\newtheorem{remark}{Remark}

\begin{document}
\maketitle

\begin{abstract}

Recent work on the instantaneous quantum polynomial-time (IQP) quantum-circuit Born machine (QCBM) highlights a promising paradigm for generative modeling: \emph{train classically, deploy quantumly}. In this setting, the training objective can be evaluated efficiently on a classical computer, while sampling from the resulting model may still be classically intractable. Furthermore, in the IQP-QCBM framework, extending the model family with ancillary qubits has been proven to yield universality. This paper asks whether similar results hold for linear-optical generative models. To this end, we introduce the \emph{Boson Sampling Born Machine} (BSBM). Our analysis retraces analogous steps as were found for IQP-QCBMs with twists. Using recent results that enable classical approximation of broad classes of expectation values in linear optics, we show that BSBMs can be trained classically for wide families of loss functions. Next, we argue that ``basic'' BSBMs are not universal generative models, and that universality can be achieved by expanding the model while preserving efficient classical training and sampling hardness. In our approach, we introduce and analyze the role of constant-function postprocessing, generalizing the construction for IQP-QCBMs, which under suitable conditions can lead to universality while preserving the hardness of classically simulating the models. We showcase a family of BSBMs, characterized by a single hyperparameter, that allows for a monotonic increase in expressivity toward universality while retaining the capacity to represent ostensibly hard distributions. Furthermore, we discuss the possible modalities for the efficient classical training, in the sense of efficient estimation of gradients of the loss function. 
\end{abstract}

\section{Introduction}
Quantum computing may enable new capabilities in machine learning, however showing a \emph{practical} quantum advantage, where near-term quantum models outperform classical ones on real tasks, still remains challenging as this requires real-world evaluation of models. Recent works in quantum generative modeling \cite{recio2025train, recio2025iqpopt, Kurkin2025IQP, bako2025fermionic} have introduced the \emph{train classically, deploy quantumly} paradigm: one optimizes model parameters using objectives that can be estimated efficiently on classical hardware, and then use a quantum device at deployment stage to generate samples from the trained model, potentially from distributions that remain hard to sample classically. In particular, instantaneous quantum polynomial-time (IQP) quantum-circuit Born machines (QCBM) were shown to be efficiently trainable at scale under the Maximum Mean Discrepancy (MMD) objective with a Gaussian kernel \cite{recio2025train}, which was later extended to a broader class of kernels in \cite{Kurkin2025IQP}. Moreover, it was shown that introducing hidden qubits renders the model universal \cite{Kurkin2025IQP}. In parallel, quantum generative modeling has also been explored in the photonic setting, with prior work on linear-optics-based generative modeling demonstrating proof-of-concept photonic hardware implementations \cite{sedrakyan2024photonic} and noise-mitigation strategies for photonic QCBMs \cite{salavrakos2025error}. In this paper, we ask whether an analogous picture to IQP-QCBMs can be realized in \emph{linear-optics}-based generative modeling.

We focus on boson sampling as the basis for our generative model, which we call the \emph{Boson Sampling Born Machine} (BSBM). Roughly speaking, a BSBM is a boson sampler with trainable optical-circuit parameters: measuring the output in the Fock basis yields samples from the model distribution. On the one hand, there is ample evidence that classical simulation of sampling from these models is intractable \cite{aaronson2011computational}. On the other hand, a stream of algorithmic advances has shown that many expectation values and related statistics of linear-optical circuits can be approximated efficiently on a classical computer to additive error \cite{gurvits2005complexity,aaronson2012generalizing,chabaud2021quantum,seron2024efficient,LimOh2025}. The analogous results of \cite{nest2009simulating} for IQP were key to establishing the classical trainability of IQP-QCBMs. Building on these results, we show that many practically relevant loss functions for BSBMs remain classically tractable to evaluate. The central question is whether we can combine this classical trainability with (i) meaningful expressive power (ideally approaching universality) and (ii) parameter regimes in which deployed sampling remains hard. We answer this question in the affirmative by providing explicit constructions.

This paper is organized as follows. \Cref{sec:preliminaries} introduces boson sampling generative models and discuss their non-universality. \Cref{sec:classical-training} presents efficient classical estimation algorithms for some training loss functions and discusses implications for training and gradient computation. 
\Cref{sec:universality} introduces a sequence of extended models, proves monotonic expressivity and universality in the large-mode limit, and explains how sampling hardness can persist in intermediate regimes. \Cref{sec:discussion} concludes with open problems related to postselection and adaptivity, and more general input states.

\section{Preliminaries}\label{sec:preliminaries}
This section fixes notation for linear-optical (boson sampling) distributions, introduces the Boson Sampling Born Machine (BSBM) used throughout the paper, and discusses its non-universality with respect to distribution approximation.

\subsection{Linear optics and boson sampling}
\label{sec:linearoptics}

We consider an $m$-mode passive linear-optical interferometer described by a unitary matrix
$U \in \mathcal{U}(m)$ acting on mode operators as
\begin{equation}
    b^\dagger_j = \sum_{i=1}^m U_{j i}\, a^\dagger_i,
    \qquad
    b_j = \sum_{i=1}^m U^*_{j i}\, a_i,
\end{equation}
where $a^\dagger_i$ ($a_i$) and $b^\dagger_j$ ($b_j$) are input and output creation (annihilation)
operators, respectively. The input state is typically a tensor product of Fock states. In the standard
boson sampling setting, one injects $k$ single photons into $k$ designated input modes, e.g.,
\begin{equation}
    \ket{\psi_{\mathrm{in}}} = \ket{1^k 0^{m-k}} \equiv \ket{1, \ldots, 1, 0, \ldots, 0},
\end{equation}
with one photon in each of the first $k$ modes and vacuum in the remaining modes. The circuit is measured nonadaptively in the photon-number (Fock) basis, producing an outcome $\bm{s}=(s_1,\ldots,s_m)\in \mathbb{N}^m$ with total photon number $\sum_{j=1}^m s_j = k$. For further background on linear optics and boson sampling see \cite{aaronson2011computational,gard2015introduction}.

A bosonic quantum computer is believed to be a non-universal model of quantum computation, while also widely believed to be classically hard to sample from the induced distribution. The work of Aaronson and Arkhipov showed that an efficient classical algorithm for (exact, and under plausible conjectures also approximate) boson sampling would imply unlikely complexity-theoretic collapses \cite{aaronson2011computational}. A common regime for complexity-theoretic hardness is the \emph{collision-free} (or anti-bunching, diluted) regime, where $m \gg k^2$ and the probability of observing any $s_j \ge 2$ is small. Conditioning
on collision-free outcomes yields the outcome space
\begin{equation}
    \Omega_m^k := \left\{\bm s\in \{0,1\}^m : \|\bm s\|_1 = k\right\}.
\end{equation}
For $\bm s\in\Omega_m^k$, define the occupied-mode set
$S(\bm{s}) := \{j\in[m] : s_j=1\}$, a $k$-subset of $[m]:=\{1,\ldots,m\}$. For the input
$\ket{1^k0^{m-k}}$, the collision-free output probabilities are given by
\begin{equation}
    q_U(\bm s)
    = \left| \mathrm{Per}\!\left(U_{S(s),[k]}\right) \right|^2,
    \label{eq:bs-permanent}
\end{equation}
where $U_{S(\bm s),[k]}$ is the $k\times k$ submatrix formed by selecting rows indexed by
$S(\bm s)$ and columns $[k]$. In the rest of this work, we focus on this collision-free regime and use boson sampling as the basis for our generative models.

\subsection{Boson Sampling Born Machine}

\label{sec:bsgm}
We parameterize a family of interferometers by the unitary describing the linear network $U\in \mathcal{U}(m)$.
Fixing an input state and measurement scheme induces a parametric family of distributions.
\begin{definition}[Boson Sampling Born Machine (BSBM)]\label{def:bsbm}
A Boson Sampling Born Machine (BSBM) is a generative model specified by: the dimension of the output distribution $n$ (here we focus on $n$-bit bitstrings), the number of optical modes used $m$, the linear optical network specifying the boson sampler described by the unitary of the optical network $U\in \mathcal{U}(m)$, and a fixed initial state of $k$ photons (we will assume $m \gg k^2$ to remain in the diluted regime).
Hereafter, the initial state is fixed to be the $k$-photon input state $\ket{\psi_{\mathrm{in}}}=\ket{1^k0^{m-k}}$, that is, one photon in each of the first $k$ modes and vacuum in the remaining $m-k$ modes.
By a photon number measurement of all output modes, postselecting on the no-collision case, it represents a family of distributions
\begin{equation}
    \mathcal{Q}(m, k):=\{q_{m,k,U}\}_{U \in \mathcal{U}(m)},
\end{equation}
with the the sample space  $\Omega_m^k:=\{\bm s\in\{0,1\}^m:\|\bm s\|_1=k\}$, and where $q_{m,k,U}$ is a probability distribution on $\Omega_m^k$ given by
\begin{equation}\label{eq: basic model induced probability}
    q_{m,k,U}(\bm s)
    =
    \bigl|\mathrm{Per}\bigl((U)_{S(\bm s),[k]}\bigr)\bigr|^2.
\end{equation}
Here, $S(\bm s):=\{j\in[m]: s_j=1\}$ denotes the corresponding $k$-element subset of occupied modes.

While $q_{m,k,U, f}$ above is defined as a distribution, depending on the context we will also use it to refer to the underlying BSBM realizing it.
\end{definition}

\begin{remark}[Specification of the model]
Each boson-sampler-based generative model is specified by the initial state, the optical network, and the final measurements. In this paper, the initial state to be a Fock state with only one photon occupying each of the first $k$ modes, and the measurement is fixed. Consequently each model is specified by the unitary specifying the linear network, implicitly the number of modes $m$, and the photon number $k$. Depending on the scenario we may change notation and explicitly characterize a model by the unitary, or in certain cases, mode number and photon number, but in all cases the full specification depends on the unitary and the photon number $k$.
\end{remark}

\begin{definition}[Universality of a generative model]
Let $\mathcal{X}$ be an output space, and let $\mathcal{P}(\mathcal{X})$ denote the set of all probability distributions on $\mathcal{X}$.
A generative model $\{q_U\}_{U \in \mathcal{U}()},\subseteq \mathcal{P}(\mathcal{X})$ is \emph{universal} with respect to a metric $d(\cdot,\cdot)$ on $\mathcal{P}(\mathcal{X})$ if for every target distribution $p\in\mathcal{P}(\mathcal{X})$ and every $\varepsilon>0$, there exists a parameter setting $U$ such that $d(p,q_U)\leq \varepsilon$.
\end{definition}

The BSBM model $\mathcal{Q}(m, k)$ is \emph{not} universal over $\{0,1\}^m$. A basic obstruction is support: the model assigns zero probability to every bitstring of Hamming weight $>k$ (in particular, to the all-ones string), hence it cannot represent arbitrary distributions on $\{0,1\}^m$. Even if one ignores this support restriction, a further limitation comes from degrees of freedom. An arbitrary probability distribution on $\{0,1\}^m$ has $2^m-1$ independent parameters, so universality over $\{0,1\}^m$ requires a model class with effectively exponentially many degrees of freedom in $m$ (a dimension-counting argument that can be made rigorous). While the boson sampling measurement outcome space can itself be exponentially large, scaling as $\binom{m}{k}$ in the collision-free regime, the learnable degrees of freedom are still constrained by the underlying linear-optical transformation, which has only $O(m^2)$ parameters. Thus, we can conclude that universality requires exponential resources in some form (qubits, gates, depth, or additional structure), which is consistent with results obtained for IQP-QCBMs~\cite{Kurkin2025IQP}. 

These observations motivate introducing an \emph{extended} model equipped with a fixed postprocessing map $f$ which compresses measurement outcomes to a lower-dimensional bitstring space, thereby inducing distributions over $\{0,1\}^n$, where $n$ matches the dimension of the target distribution. In \Cref{sec:universality}, we define this extended BSBM with fixed postprocessing and prove its universality. Before that, in the next section, we focus on efficient classical training of these models.

\section{Classical training of BSBMs}\label{sec:classical-training}

One widely used loss in generative modeling is the Maximum Mean Discrepancy (MMD), which compares two distributions by differences
of expectations in a reproducing kernel Hilbert space (RKHS).

\begin{definition}[Maximum Mean Discrepancy (MMD)]
Given two distributions $p$ and $q$ over a space $\mathcal{X}$ and a positive-definite kernel
$k:\mathcal{X}\times\mathcal{X}\to\mathbb{R}$ with associated RKHS $\mathcal{H}$, the MMD is
\begin{equation}\label{eq:MMD_def}
    \operatorname{MMD}(p,q)
    :=
    \sup_{\substack{f\in\mathcal{H}\\ \|f\|_{\mathcal{H}}\le 1}}
    \left(\mathbb{E}_{x\sim p}[f(x)]-\mathbb{E}_{y\sim q}[f(y)]\right).
\end{equation}
Equivalently, the squared MMD admits the kernel form
\begin{equation}\label{eq:MMD_sq_kernel}
    \operatorname{MMD}^2(p,q)
    =
    \mathbb{E}_{x,x'\sim p}[k(x,x')]
    +
    \mathbb{E}_{y,y'\sim q}[k(y,y')]
    -
    2\,\mathbb{E}_{x\sim p,\,y\sim q}[k(x,y)].
\end{equation}
\end{definition}

\begin{lemma}[Generalized kernel MMD via parity expectation values]\label{thm:generalized}
    For any stationary and bounded kernel $k$ over $\{0,1\}^m$, let $G$ be the Fourier transform of $k$. The $\mathrm{MMD}$ loss of BSBM as specified in \Cref{def:bsbm} can be expressed as:
    \begin{equation}\label{eq: MMD general kernel}
        \operatorname{MMD}^2(p, q_\theta) = \mathbb{E}_{\alpha \sim G} \left(\langle \Pi_\alpha \rangle_p - \langle \Pi_\alpha \rangle_{q_\theta}\right)^2,
    \end{equation}
    where $\Pi_\alpha \coloneqq \bigotimes_{j=1}^m \Pi_j^{\alpha_j}$ and $\Pi_j$ is the single-mode parity operator $\Pi_j \coloneqq (-1)^{\hat{n}_j}$ with $\hat{n}_j$ being the number operator on mode $j$. Moreover, a parity word expectation values can be written as
    $
        \langle \Pi_\alpha \rangle_p = \mathbb{E}_{x \sim p}[ (-1)^{\alpha \cdot x}]
    $ and $\langle \Pi_\alpha\rangle_{q_\theta} = \bra{\psi_{\mathrm{in}}} \hat{U}^\dagger(\theta) \Pi_\alpha \hat{U}(\theta) \ket{\psi_{\mathrm{in}}}$.
\end{lemma}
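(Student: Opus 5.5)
The plan is to start from the kernel form \eqref{eq:MMD_sq_kernel} and diagonalize the kernel in the Walsh--Fourier (character) basis of the group $\mathbb{Z}_2^m$. Stationarity means $k(x,y)=\kappa(x\oplus y)$ for some bounded function $\kappa$ on $\{0,1\}^m$. Positive-definiteness then gives, by Bochner's theorem on the finite abelian group $\mathbb{Z}_2^m$, a Fourier expansion with nonnegative coefficients:
\begin{equation*}
\kappa(z)=\sum_{\alpha\in\{0,1\}^m} G(\alpha)\,(-1)^{\alpha\cdot z}, \qquad G(\alpha)\ge 0 .
\end{equation*}
Here $G(\alpha)=2^{-m}\sum_z \kappa(z)(-1)^{\alpha\cdot z}$. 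After normalizing so that $\kappa(0)=\sum_\alpha G(\alpha)=1$ (an overall positive rescaling of the kernel, which only rescales the MMD), $G$ is a probability distribution on $\{0,1\}^m$. This lets us write $\sum_\alpha G(\alpha)(\cdot)$ as $\mathbb{E}_{\alpha\sim G}(\cdot)$.

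Next we substitute this expansion into each of the three terms of \eqref{eq:MMD_sq_kernel}. We use $(-1)^{\alpha\cdot(x\oplus y)}=(-1)^{\alpha\cdot x}(-1)^{\alpha\cdot y}$ and exchange the finite sums. This gives
\begin{equation*}
\mathbb{E}_{x,y}[k(x,y)]=\sum_\alpha G(\alpha)\,\mathbb{E}_x[(-1)^{\alpha\cdot x}]\,\mathbb{E}_y[(-1)^{\alpha\cdot y}].
\end{equation*}
Writing $\chi_\alpha(p):=\mathbb{E}_{x\sim p}[(-1)^{\alpha\cdot x}]$, the three terms combine into $\sum_\alpha G(\alpha)\bigl(\chi_\alpha(p)^2+\chi_\alpha(q_\theta)^2-2\chi_\alpha(p)\chi_\alpha(q_\theta)\bigr)=\mathbb{E}_{\alpha\sim G}\bigl(\chi_\alpha(p)-\chi_\alpha(q_\theta)\bigr)^2$. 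This is \eqref{eq: MMD general kernel} once $\chi_\alpha$ is identified with the parity expectation. For $p$ this is simply the definition $\langle\Pi_\alpha\rangle_p:=\chi_\alpha(p)$.

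The remaining step, which is the one needing care, is the quantum identification for $q_\theta$. The projective Fock measurement has outcomes $\bm s$ with probabilities $|\langle \bm s|\hat U(\theta)|\psi_{\mathrm{in}}\rangle|^2$. The operator $\Pi_\alpha=\prod_j(-1)^{\alpha_j\hat n_j}$ is diagonal in the Fock basis with eigenvalue $(-1)^{\alpha\cdot\bm s}$ on $\ket{\bm s}$. Hence
\begin{equation*}
\bra{\psi_{\mathrm{in}}}\hat U^\dagger\Pi_\alpha\hat U\ket{\psi_{\mathrm{in}}}=\sum_{\bm s}(-1)^{\alpha\cdot\bm s}\,|\langle\bm s|\hat U|\psi_{\mathrm{in}}\rangle|^2 .
\end{equation*}

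The obstacle is the collision-free postselection in \Cref{def:bsbm}. Strictly, $q_\theta$ lives on $\Omega_m^k$, while the sum above runs over all outcomes, including bunched ones. I would handle this as follows. On collision-free outcomes, $(-1)^{\alpha\cdot \bm s}$ agrees with the bitstring character. The operator identity is therefore exact for the model defined by parity-reducing the output, $x_j=s_j \bmod 2$, which is a distribution on $\{0,1\}^m$. In the diluted regime $m\gg k^2$, that model coincides with the postselected $q_\theta$ up to the (small) collision probability. I would state the lemma for this parity-reduced distribution, noting that it equals $q_\theta$ on $\Omega_m^k$. Alternatively, one can take $q_\theta$ to mean the unnormalized restriction, so the identity holds as written. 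The kernel computation itself is finite and routine.
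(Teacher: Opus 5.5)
Your proposal is correct and follows the same route as the paper. The paper simply defers to Rudolph et al.: Walsh--Fourier expand the stationary kernel, substitute into the kernel form of $\mathrm{MMD}^2$, and read the characters $(-1)^{\alpha\cdot x}$ as parity words instead of Pauli-$Z$ words. Your parity reduction $x_j=s_j \bmod 2$ is more careful than the paper, which only asserts that parity and pseudo-$Z$ agree in the dilute regime.

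The one slip is your fallback. Taking $q_\theta$ to be the unnormalized restriction to $\Omega_m^k$ does \emph{not} make $\langle\Pi_\alpha\rangle_{q_\theta}=\bra{\psi_{\mathrm{in}}}\hat U^\dagger\Pi_\alpha\hat U\ket{\psi_{\mathrm{in}}}$ exact. The operator expectation still contains the bunched contributions $\sum_{\bm s\notin\Omega_m^k}(-1)^{\alpha\cdot\bm s}|\langle\bm s|\hat U|\psi_{\mathrm{in}}\rangle|^2$. So keep the parity-reduced formulation, which, as you note, agrees exactly with $q_\theta$ on $\Omega_m^k$, since bunched outcomes reduce to strings of weight at most $k-2$.
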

\noindent The proof goes analogously to the one presented in Ref.~\cite{rudolph2024trainability}, replacing Pauli-Z operators with parity operators.

Using Gurvits' algorithm~\cite{gurvits2005complexity,aaronson2011computational}, one can estimate the transition amplitudes of passive linear optical circuits with absolute error $\varepsilon$ with success probability $1-\delta$ using $\mathcal{O}(k^2 \log \delta^{-1} / \varepsilon^2)$ resources for collision-free Fock states consisting of $k$ photons. Consequently, phase-shifter expectation values of the form $\bra{\phi} \hat{U}^\dagger e^{i \sum_{j=1}^m \phi_j \hat{n}_j} \hat{U} \ket{\psi}$ can also be approximated for an arbitrary linear optical circuit $\hat{U}$. Noticing that the parity operator can be written as $\Pi_j = e^{i \pi \hat{n}_j}$, we can use this result to estimate the expectations of parity words as an ingredient for efficient MMD estimation on classical hardware.

\begin{lemma}[Estimating parity word expectations in a BSBM]\label{lemma: Efficient estimatmation of Pauli-Z word}
Consider a BSBM on $m$ modes as specified in \Cref{def:bsbm}. Let $q_\theta$ be the induced output distribution of BSBM. Moreover, let $\alpha \in \{0,1\}^m$ and consider the parity word $\Pi_{\alpha}$.
There exists a randomized classical algorithm that outputs an unbiased estimator $\widehat{\mu}_\theta(\alpha)$ of the expectation value $\mu_\theta(\alpha) := \langle \Pi_\alpha\rangle_{q_\theta} = \bra{\psi_{\mathrm{in}}} \hat{U}^\dagger(\theta) \Pi_\alpha \hat{U}(\theta) \ket{\psi_{\mathrm{in}}}$ such that, for any $\varepsilon,\delta \in (0,1)$, $\Pr\!\left[\big|\widehat{\mu}_\theta(\alpha)-\mu_\theta(\alpha)\big|\le \varepsilon\right]\;\ge\; 1-\delta$.
Moreover, the runtime of the algorithm is $\mathcal{O}(k^2\log\delta^{-1}/\varepsilon^2)$.
\end{lemma}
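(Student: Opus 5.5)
The plan is to reduce the parity expectation to a single transition amplitude of a modified passive linear-optical circuit, and then estimate that amplitude with Gurvits' algorithm.

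\textbf{Step 1: rewrite the parity word as a linear-optical circuit.} Since $\Pi_j = e^{i\pi \hat n_j}$, the parity word is
\begin{equation*}
\Pi_\alpha = e^{i\pi\sum_j \alpha_j \hat n_j}.
\end{equation*}
This is the passive linear-optical unitary associated with the diagonal matrix $D_\alpha=\mathrm{diag}((-1)^{\alpha_1},\dots,(-1)^{\alpha_m})\in\mathcal U(m)$. Passive circuits compose via the homomorphism $V\mapsto \hat V$, so
\begin{equation*}
\hat U^\dagger(\theta)\,\Pi_\alpha\,\hat U(\theta)=\hat W,\qquad W := U^\dagger D_\alpha U\in\mathcal U(m).
\end{equation*}
One must fix the convention for how the paper's mode transformation acts ($b^\dagger = U a^\dagger$). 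This only decides whether $W$ is $U^\dagger D_\alpha U$ or its transpose/conjugate, and either choice is a unitary. Hence
\begin{equation*}
\mu_\theta(\alpha)=\bra{1^k0^{m-k}}\hat W\ket{1^k0^{m-k}}=\mathrm{Per}(W_{[k],[k]}),
\end{equation*}
which uses the standard permanent formula for collision-free transition amplitudes, the same one underlying \eqref{eq:bs-permanent}. The quantity is real because $\Pi_\alpha$ is Hermitian, but this is not needed.

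\textbf{Step 2: apply Gurvits' estimator.} The submatrix $A=W_{[k],[k]}$ has operator norm at most $1$. Gurvits' randomized estimator draws $x\in\{\pm1\}^k$ uniformly and outputs
\begin{equation*}
X=\prod_i x_i\,\prod_{i}\Big(\sum_j A_{ij}x_j\Big),
\end{equation*}
so that $\mathbb E[X]=\mathrm{Per}(A)$. This gives unbiasedness directly. The bound on $|X|$ follows from Hadamard/AM--GM: $|\prod_i (Ax)_i|\le (\|Ax\|^2/k)^{k/2}\le 1$. The estimator $\widehat\mu_\theta(\alpha)$ is the empirical mean of $N$ independent copies, which stays unbiased. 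Applying Hoeffding separately to the real and imaginary parts, or just the real part, $N=O(\log\delta^{-1}/\varepsilon^2)$ samples give $|\widehat\mu-\mu|\le\varepsilon$ with probability at least $1-\delta$.

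\textbf{Step 3: runtime accounting.} Each sample needs the vector $Ax$, which costs $O(k^2)$ once $A$ is available. This gives the claimed $O(k^2\log\delta^{-1}/\varepsilon^2)$ total. Computing $A=(U^\dagger D_\alpha U)_{[k],[k]}=U_{:,[k]}^\dagger D_\alpha U_{:,[k]}$ costs $O(mk^2)$ once. This is a one-time preprocessing step, to be treated either as an additive term or as included in the model description, consistent with how the paper quotes Gurvits' resource count.

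The main obstacle is only this bookkeeping in Steps 1 and 3. In Step 1 one must check the convention so that $W$ is indeed unitary and the amplitude is exactly the permanent of its leading $k\times k$ block. In Step 3 one must be honest that the $O(mk^2)$ preprocessing sits outside the stated $\varepsilon$-dependent cost. The probabilistic part is standard.
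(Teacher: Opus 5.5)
Your proposal is correct and takes essentially the same route as the paper's Appendix~\ref{app:algorithm}. Both write $\hat U^\dagger\Pi_\alpha\hat U$ as the passive circuit of $V=U^\dagger\operatorname{diag}(1-2\alpha)U$, reduce the expectation to $\operatorname{Per}(V_{[k],[k]})$, and average Gurvits' estimator (bounded by $1$) with a Hoeffding bound. Your separate accounting of the one-time $O(mk^2)$ cost of forming the $k\times k$ block is a fair refinement: the paper's stated runtime silently leaves this cost out.
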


\noindent We provide a specification of the algorithm in Appendix~\ref{app:algorithm}. Furthermore, since in the collision-free regime the parity operator is equivalent to the pseudo-Pauli-$Z$ operator defined as $Z=\ket0\!\bra0-\ket1\!\bra1$ in the dilute regime, this algorithm can be directly applied to obtain the expectation values of words formed from these operators as well.

\begin{theorem}[Efficient estimation of MMD losses and gradients]
Let $q_\theta$ be the distribution generated by the BSBM and let $p$ be a target distribution. Consider the squared MMD with respect to a kernel $k$, and let $G$ denote the spectral measure associated with $k$ via the Fourier transform. 
There exists an efficient classical algorithm that constructs an unbiased estimator of $\mathrm{MMD}_k^2(p,q_\theta)$ and an unbiased estimator of its gradient with respect to $\theta$.
\end{theorem}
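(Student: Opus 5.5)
The plan is to combine the spectral representation of \Cref{thm:generalized} with the estimator of \Cref{lemma: Efficient estimatmation of Pauli-Z word}, using independent randomness so that squared terms stay unbiased. Write
\begin{equation*}
\operatorname{MMD}^2(p,q_\theta)=\mathbb{E}_{\alpha\sim G}\bigl(\langle\Pi_\alpha\rangle_p-\mu_\theta(\alpha)\bigr)^2 .
\end{equation*}
Normalize $G$ to a probability measure; the kernel is bounded and stationary, so the total mass is $k(0,0)$, which we absorb as a constant. We also assume $G$ can be sampled efficiently, as for the Gaussian and related kernels.

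The estimator is built as follows. First sample $\alpha\sim G$. Then take two independent estimates of the data term, $\langle\Pi_\alpha\rangle_p$, from two disjoint batches of target samples, namely empirical means of $(-1)^{\alpha\cdot x}$. Next, run the algorithm of \Cref{lemma: Efficient estimatmation of Pauli-Z word} twice with fresh independent randomness to get $\widehat\mu^{(1)}_\theta(\alpha)$ and $\widehat\mu^{(2)}_\theta(\alpha)$. Output the product
\begin{equation*}
\bigl(\widehat{p}^{(1)}-\widehat\mu^{(1)}_\theta\bigr)\bigl(\widehat{p}^{(2)}-\widehat\mu^{(2)}_\theta\bigr).
\end{equation*}
Conditional on $\alpha$, the two factors are independent and each is unbiased for $\langle\Pi_\alpha\rangle_p-\mu_\theta(\alpha)$, so the product is unbiased for the square. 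The tower property over $\alpha$ then gives unbiasedness for $\operatorname{MMD}^2$.

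Efficiency comes from boundedness. The Gurvits-type estimator in the appendix is an average of bounded random variables: its single-shot estimates have modulus at most $1$, since they are expectations of the form $\langle\phi|\hat U^\dagger e^{i\sum\phi_j\hat n_j}\hat U|\psi\rangle$ evaluated via random vectors. Each factor is therefore bounded by an absolute constant. Averaging $N$ independent copies then gives additive error $\varepsilon$ with $N=O(\log\delta^{-1}/\varepsilon^2)$ by Hoeffding, at $O(k^2)$ cost per sample.

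For the gradient, differentiate under the expectation over $\alpha$, since $G$ does not depend on $\theta$:
\begin{equation*}
\nabla_\theta\operatorname{MMD}^2=-2\,\mathbb{E}_{\alpha\sim G}\bigl(\langle\Pi_\alpha\rangle_p-\mu_\theta(\alpha)\bigr)\nabla_\theta\mu_\theta(\alpha).
\end{equation*}
The estimator is again a product of independent estimates: one for the difference, as above, and one for $\partial_{\theta_j}\mu_\theta(\alpha)$.

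The main obstacle is estimating $\partial_{\theta_j}\mu_\theta(\alpha)$ in an unbiased way. My first attempt is a parameter-shift rule. For a phase shifter $e^{i\theta_j\hat n_\ell}$ the generator $\hat n_\ell$ is not two-level, but in the collision-free input sector the relevant photon numbers in the mode are bounded by $k$. A general shift rule with a bounded number of shifts therefore applies, and in a beamsplitter parametrization, where the mode transformation is $\cos/\sin$ in $\theta_j$, the single-particle structure makes $\mu_\theta$ a trigonometric polynomial of degree at most $2k$ in $\theta_j$. A cleaner alternative avoids shifts altogether. Since $\mu_\theta(\alpha)=\langle\psi_{\mathrm{in}}|\hat U^\dagger\Pi_\alpha\hat U|\psi_{\mathrm{in}}\rangle$ and the Gurvits estimator is an explicit smooth function of the single-particle matrix $U^\dagger D_\alpha U$ (with $D_\alpha=\mathrm{diag}((-1)^{\alpha_j})$), I would differentiate the estimator's integrand directly in $\theta$. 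This yields an unbiased estimator by exchanging derivative and expectation, which is justified because the integrand is a polynomial in the entries of $U(\theta)$ and the random vectors.

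Its magnitude still has to be bounded by $\mathrm{poly}(k)\cdot\|\partial_\theta U\|$ to get polynomial sample complexity. I would do this via the product rule over the $k$ factors of Glynn's estimator, each of modulus at most one. This boundedness check is the step I expect to need the most care.
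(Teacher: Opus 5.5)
Your proposal follows the paper's route. You sample $\alpha\sim G$, estimate $\mu_\theta(\alpha)$ with the Gurvits--Glynn estimator, and get gradients by differentiating the Monte Carlo integrand with $\alpha$ and the random signs held fixed. Your independent-replica products go further than the paper: they make the debiasing explicit, both for $\mathrm{MMD}^2$ and for the product $(\langle\Pi_\alpha\rangle_p-\mu_\theta(\alpha))\nabla_\theta\mu_\theta(\alpha)$ in the gradient, whereas the paper only points to the U-statistic estimator of Gretton et al.

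One correction to the step you flagged. The individual factors $(Wx)_i=\sum_j W_{ij}x_j$ are not of modulus at most one; a row $(1,\dots,1)/\sqrt{k}$ with $x=(1,\dots,1)$ gives $\sqrt{k}$. Only their product is bounded by one, via Gurvits' AM--GM bound $\prod_i|(Wx)_i|\le(\|Wx\|_2^2/k)^{k/2}\le\|W\|^k\le 1$. Applying the same AM--GM to the $k-1$ remaining factors of each product-rule term gives $\prod_{i\neq l}|(Wx)_i|\le(k/(k-1))^{(k-1)/2}\le\sqrt{e}$. Hence $|\partial_\theta\mathrm{Rys}_x(W)|\le\sqrt{e}\,k\,\|\partial_\theta W\|\le 2\sqrt{e}\,k\,\|\partial_\theta U\|$, which is the polynomial bound you need.
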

\begin{proof}[Sketch of proof]
     This is because of (1), the expectation value $\langle \Pi_{\alpha}\rangle_{q_\theta}$  can be efficiently estimated, and (2) $\mathrm{MMD}^2$ is a probabilistic mixture over these expectation values, which can be estimated efficiently by sampling bitstrings $\alpha\in\{0,1\}^m$ from the measure $G$.

    For gradients, note that the estimator for $\langle \Pi_\alpha\rangle_{q_\theta}$ as in Lemma~\ref{lemma: Efficient estimatmation of Pauli-Z word} can be written as a Monte Carlo average $\widehat{\mu}_\theta(\alpha)=\frac{1}{N}\sum_{i=1}^N X(\theta,\alpha;\xi_i)$, for some per-sample computation $X$ and internal randomness $\xi_i$. Under standard regularity conditions, differentiation can be exchanged with the expectation, giving $\nabla_\theta \mu_\theta(\alpha)=\mathbb{E}_\xi\!\left[\nabla_\theta X(\theta,\alpha;\xi)\right]$. In practice, $X$ can be implemented via differentiable linear algebra and automatic differentiation, whereas $\alpha$ and the internal randomness $\xi$ are sampled outside the automatic differentiation graph.
\end{proof}

\begin{remark}[Trainability does not imply simulability]
    Efficient estimation of MMD (and its gradient) does not imply efficient classical sampling from $q_\theta$. This is the same separation exploited in classically trained, quantum-deployed IQP models \cite{Kurkin2025IQP,recio2025train}.
\end{remark}

\begin{remark}[Unbiased estimator]
    Using \Cref{eq: MMD general kernel} directly along with the algorithm given by \Cref{lemma: Efficient estimatmation of Pauli-Z word} yields a biased estimator for $\operatorname{MMD}^2$. To make it unbiased, one should use the unbiased estimator from Refs.~\cite{gretton12, recio2025train}.
\end{remark}

\section{Extended mode: universality, expressivity, hardness and classical training} \label{sec:universality}

\subsection{Extended BSBM model}

As shown in \Cref{sec:preliminaries}, the BSBM with $n$ modes class is not universal for generative modeling over $n$-bit outputs, which is easily evidenced by a simple parameter counting argument. To increase the capacity to represent more distributions, we can enlarge the underlying photonic system, which increases the number of modes and available parameters.  
This will yield samples in a higher-dimensional outcome space, but which can be projected down to $n$-bit outcomes, while increasing the overall representation power. This is done by applying a fixed post-processing map (which we later call the readout map) with domain in the output space of the larger interferometer ($m$-dimensional bitstrings of weight $k$), and codomain in $n$-bit bitstrings. A related idea was used in~\cite{Kurkin2025IQP} to make the IQP--QCBM model universal: hidden qubits, which introduce extra degrees of freedom, were added, followed by a fixed readout (given by a partial trace over the hidden qubits). Following this intuition, we define an \emph{extended} BSBM generative model as follows.

\begin{definition}[Extended Boson Sampling Born Machine (EBSBM)]\label{def:EBSBM}

Am \emph{Extended} Boson Sampling Born Machine (BSBM) is a generative model specified by: (i) the dimension of the output distribution $n$ (here we focus on $n$-bit bitstrings), (ii) the number of optical modes used $m$, (iii) the linear optical network specifying the boson sampler specified by the unitary of the optical network $U \in \mathcal{U}(m)$,  (iv) a fixed initial state of $k$ photons (we will assume $m \gg k^2$ to remain in the diluted regime), and (v) a readout map $f:\Omega^k_m \rightarrow \{ 0,1\}^n$.
The initial state is fixed to be the $k$-photon input state $\ket{\psi_{\mathrm{in}}}=\ket{1^k0^{m-k}}$ that is, one photon in each of the first $k$ modes and vacuum in the remaining $m-k$ modes.
By a photon number measurement of all output modes, postselecting on the no-collision case, and applying the map $f$ on the output, it represents a family of distributions
$\{0, 1\}^n$:
\begin{equation}
    \mathcal{Q}^{\star}:=\{q^{\star}_{m,k,U, f}\}_{U \in \mathcal{U}(m)},
\end{equation}

In other words, if we denote the distributions obtained by the `bare' boson sampling model without the application of the readout map by  $q_{m,k,U}$ (defined by \Cref{eq: basic model induced probability} with outcome space $\Omega_m^k:=\{\bm s\in\{0,1\}^m:\|\bm s\|_1 = k\}$), then the extended model realizes the distributions on $\{0,1\}^n$ given by
\begin{equation}\label{eq:model distribution}
      q^{\star}_{m,k,U, f}(x) :=
      \sum_{\substack{\bm s\in\Omega_m^k\\ f(\bm s)=x}} q_{m,k,U}(\bm s).
\end{equation}
While $q^{\star}_{m,k,U, f}$ above is defined as a distribution, in context, we will also use it to refer to the underlying EBSBM realizing it.

\end{definition}

In the following, when we wish to focus on EBSBMs with exactly $m$ modes and $k$ photons, we will write $\mathcal{Q}^{\star}(m,k)$.

\begin{remark}[Fixed readout]
We note that universality could have been nearly trivially achieved by allowing parametrized readout maps, as they can convert even fixed sources of randomness to arbitrary distributions. However, in our case, the mapping is fixed.
\end{remark}

This definition above yields a simple universal construction we give more explicitly later but sketch here quickly: we fix $n\in\mathbb{N}$ and consider the EBSBM with a single photon ($k=1$), $m=2^n$ modes, and a readout map $f:\Omega^1_{2^n}  \rightarrow \{0,1 \}^n$ which returns the position of the measured single photon in binary. Then it is easy to see that the model is universal (see beginning of proof of  \Cref{thm:sufficient conditions}).

Although universality can be achieved at sufficiently large capacity parameter values, the required overhead is impractical as we need to specify exponentially large objects in $n$. In practice, we are interested in intermediate regimes that use fewer parameters but still cover rich classes of distributions, which motivates considering a family of models, characterized by a single\footnote{One can consider more parameters but for our purposes one will suffice.} structural capacity-increasing hyperparameter.
Analogous constructions are ubiquitous in machine learning, e.g., the one hidden layer neural network becomes universal as we increase the number of  hidden units which then constitute the capacity-increasing hyperparameter. 

\begin{definition}[EBSBM family with capacity hyperparameter]
    For each target output size $n$, we define EBSBM \emph{tower} as a sequence $\{\mathcal{Q}^{\star}(m_j(n), k_j(n))\}_{j\in J(n)}$, where each model is indexed by a capacity hyperparameter $j$ and $J(n)$ is the size of the family. Note that $(m_j(n),k_j(n))$ is chosen so that the corresponding model for each level $j$ is in the collision-free regime.
\end{definition}

We note that in general $J(n)$ may not be bounded, and the family could obtain universality only in a limit. In the examples we will construct we obtain exact universality for finite (but exponential) $J(n)$.

We aim to design a family satisfying the following \textbf{properties}:

\begin{enumerate}\label{it:setup}
  \item[P1.] \textbf{Monotonic increase in expressivity.} As $j$ increases, the model should become more expressive.

  \item[P2.] \textbf{Universality in the limit.} There should exist a capacity hyperparameter $j$ such that the model can represent any distribution on $\{0,1\}^n$. In general, it would also suffice to achieve universality only asymptotically.

  \item[P3.] \textbf{Potential for classical sampling hardness via reduction.} For polynomial-size model descriptions, the family should inherit worst-case sampling hardness from standard boson sampling.
  
\end{enumerate}

We will say that any family which has the 3 properties above has the \textbf{the monotonic, universal, hard (MUH) property}.
We note that the above properties pertain just to the properties of these families in terms of distributions they can represent. We will address learnability separately in a later section.

Next, we identify concrete criteria on the quantum model family which ensure these properties are achieved and provide concrete constructions.

\subsection{Achieving monotone expressivity, universality and hardness inheritance}

As each EBSBM family is characterized by a sequence of boson samplers and a family of matching readout maps $f$, next we find  sufficient conditions the the boson samplers, and of the readout maps which will guarantee that the three above conditions are satisfied.

\begin{theorem}[Sufficient conditions]\label{thm:sufficient conditions}
For each target output size $n$, consider a sequence of EBSBM models $\{\mathcal{Q}^{\star}(m_j(n), k_j(n))\}_{j=1}^{J(n)}$ with a sequence readout map $f_{j,n}:\Omega_{m_j(n)}^{k_j(n)}\to\{0,1\}^n$ (we will omit further parameter $n$). Assume that, for each level $j$, the model structure and the readout map satisfy the following conditions:
\begin{enumerate}
    \item $2^{n-1} < \binom{m_1}{k_1} \le 2^n$ for $j=1$.
    \item For each $j \ge 2$: $2^n \le \binom{m_j}{k_j} \le 2^{n+1}$ and the map $f_j : \Omega_{m_j}^{k_j} \to \{0,1\}^n$ is surjective.
    \item For each $j \in \{1,\dots,J-1\}$ there exist: (a) an efficiently computable embedding of unitaries $\iota_j : U(m_j) \to U(m_{j+1})$ and (b) an injective embedding of outcomes $R_j : \Omega_{m_j}^{k_j} \to \Omega_{m_{j+1}}^{k_{j+1}}$ such that:
    \begin{equation*}
        \forall U \in \mathcal{U}(m_j), \qquad 
        q_{m_{j+1},k_{j+1},\iota_j(U)} = (R_j)_{\#} q_{m_j,k_j, U},
    \end{equation*}
    and the readouts are compatible: $f_{j+1} \circ R_j = f_j$ on $\Omega_{m_j}^{k_j}$.

    Here $(R_j)_{\#} q_{m_j,k_j, U}$ denotes the pushforward distribution obtained by applying $R_j$ to the output sampled according to $q_{m_j,k_j, U}$.
    
    \item For level $j=1$ sampling from the raw distribution $q_{m_1,k_1, U}$ is intractable for classical polynomial-time algorithms under standard boson-sampling complexity theory assumptions. 
    
    \item Define injections $F'_j : \Omega_{m_1}^{k_1} \to \Omega_{m_j}^{k_j}$: by composing the embeddings:
    \[
    F'_1 := \mathrm{id},
    \qquad
    F'_j := R_{j-1} \circ R_{j-2} \circ \cdots \circ R_1
    \quad (j \ge 2).
    \]
    For every $j \ge 1$:
    \begin{enumerate}
        \item the restriction $f_j\big|_{F'_j(\Omega_{m_1}^{k_1})}: F'_j(\Omega_{m_1}^{k_1}) \to f_j(F'_j(\Omega_{m_1}^{k_1}))$
        is bijective, and the inverse \\
        $( f_j\big|_{F'_j(\Omega_{m_1}^{k_1})})^{-1}$ is computable in polynomial time.
        \item membership in $f_j\left(F'_j(\Omega_{m_1}^{k_1})\right)$ is decidable in polynomial time.
    \end{enumerate}
    \item There exists a level $J$ such that: (i) the 
    underlying boson sampling interferometer can embed an interferometer with $2^n$ modes and exactly 1 photon in chosen $2^n$ modes, (and which moves the remainder of the photons to other chosen $k_J-1$ modes) and (ii) $f_J$ maps the possible $2^n$ distinct output states of this setting bijectively to $\{0,1\}^n $ and where both $f_J$ and its inverse are computationally efficient.

\end{enumerate}
Then, this family satisfies \textbf{properties P1-P3}, i.e., has monotonically increased expressivity, universality-in-the limit, and inherited hardness from the base level, i.e., \textbf{has the MUH property}.
\end{theorem}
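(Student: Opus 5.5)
We prove the three properties separately, each from a different subset of the hypotheses. P1 (monotonicity) follows from condition~3; P2 (universality) from condition~6 together with the one-photon construction sketched after \Cref{def:EBSBM}; P3 (hardness) from conditions~4 and~5.

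\emph{Monotonicity (P1).} We show that the set of distributions on $\{0,1\}^n$ realizable at level $j$ is contained in that at level $j+1$. Fix $U\in\mathcal{U}(m_j)$ and set $U'=\iota_j(U)$. By condition~3, $q_{m_{j+1},k_{j+1},U'}=(R_j)_\# q_{m_j,k_j,U}$. Pushing forward through $f_{j+1}$ and using functoriality of pushforwards gives
\[
q^\star_{m_{j+1},k_{j+1},U',f_{j+1}}=(f_{j+1}\circ R_j)_\# q_{m_j,k_j,U}=(f_j)_\# q_{m_j,k_j,U}=q^\star_{m_j,k_j,U,f_j},
\]
where the middle equality is the compatibility $f_{j+1}\circ R_j=f_j$. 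Hence every level-$j$ distribution is a level-$(j+1)$ distribution, and induction on $j$ yields the chain of inclusions. Since $\iota_j$ is efficient, the inclusion is also constructive.

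\emph{Universality (P2).} First we show that the one-photon model on $2^n$ modes, read out in binary, is universal. Given a target $p$ on $\{0,1\}^n$, choose any unitary $V\in\mathcal{U}(2^n)$ whose first column is $(\sqrt{p(x)})_x$; such a $V$ exists by completing this unit vector to an orthonormal basis. By \Cref{eq:bs-permanent} with $k=1$, the photon exits in mode $x$ with probability $|V_{x,1}|^2=p(x)$. Condition~6(i) lets us embed this $V$ at level $J$. The remaining $k_J-1$ photons are routed deterministically to fixed modes, and the resulting raw outcomes form a set of $2^n$ distinct strings, each carrying probability $p(x)$. Condition~6(ii) says $f_J$ maps these strings bijectively onto $\{0,1\}^n$; we arrange the embedding so that $f_J$ sends the outcome labelled $x$ to $x$, permuting the rows of $V$ if necessary. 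Then $q^\star=p$ exactly, and combined with P1 every distribution realizable at level $J$ is also realizable at every level $j'\geq J$.

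\emph{Hardness (P3).} The plan is a reduction from sampling at level~1 to sampling at level~$j$. Suppose a classical algorithm $\mathcal{A}$ samples $q^\star_{m_j,k_j,U',f_j}$ for all $U'$. Given a level-1 instance $U$, set $U'=\iota_{j-1}\circ\cdots\circ\iota_1(U)$, which is efficient by condition~3(a). By P1's computation, $q_{m_j,k_j,U'}=(F'_j)_\# q_{m_1,k_1,U}$, so the raw level-$j$ distribution is supported on $F'_j(\Omega^{k_1}_{m_1})$. Condition~5(a) says $f_j$ is injective on that support, so $f_j$ loses no information there. To recover a level-1 sample, take $x\sim\mathcal{A}$, check membership of $x$ in $f_j(F'_j(\Omega^{k_1}_{m_1}))$ using condition~5(b), apply $(f_j|_{F'_j})^{-1}$, and then apply $(F'_j)^{-1}$. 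Exact sampling therefore transfers to level~1, contradicting condition~4. Approximate sampling transfers as well, since these deterministic maps do not increase total-variation distance; outputs failing the membership test can be discarded or replaced with a fixed output at no extra error cost.

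\emph{Main obstacle.} The delicate step is making the inversion of $F'_j$ efficient. The hypotheses state only that the $R_i$ are injective, not that they are efficiently invertible. I would handle this in one of two ways. If the embeddings are explicit, as in the intended constructions, inverting each $R_i$ is direct. Otherwise, I would note that a sample of $q_{m_j,k_j,U'}$ restricted to $F'_j(\Omega^{k_1}_{m_1})$ is itself equivalent, by relabelling, to a sample of the level-1 boson sampler, so hardness transfers through that equivalence. I would also check that $\binom{m_1}{k_1}\le 2^n$ from condition~1 is consistent with the bijection required in condition~5(a), which needs an image of size $\binom{m_1}{k_1}$ inside $\{0,1\}^n$.
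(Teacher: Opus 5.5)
Your decomposition is the same as the paper's sketch: P1 from condition 3 via pushforwards, P2 from condition 6 using the one-photon state with amplitude column $\sqrt{p}$, and P3 by a reduction through the composed embeddings. P1 and P2 are fine as written; permuting the rows of $V$ to match an arbitrary bijection $f_J$ is a detail the paper glosses over. The problem is the P3 step you flag yourself. Nothing in the hypotheses makes $(F'_j)^{-1}$ efficiently computable, and neither of your workarounds fixes this. Assuming explicit embeddings adds a hypothesis the theorem does not contain. The ``equivalence by relabelling'' is circular: the relabelling is exactly $(F'_j)^{-1}$, and a polynomial-time reduction has to actually output an element of $\Omega_{m_1}^{k_1}$.

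The missing idea is that you never need to invert the $R_i$ at all. Iterating the compatibility condition $f_{i+1}\circ R_i=f_i$ gives $f_j\circ F'_j=f_1$; your own P1 computation, iterated, already says this at the level of distributions. So for $U'=\iota_{j-1}\circ\cdots\circ\iota_1(U)$ the level-$j$ output is distributed exactly as $(f_1)_\# q_{m_1,k_1,U}$. Condition 5 at $j=1$, where $F'_1=\mathrm{id}$, supplies a polynomial-time inverse of $f_1$ on $\Omega_{m_1}^{k_1}$. It also supplies a polynomial-time membership test for $f_1(\Omega_{m_1}^{k_1})=f_j(F'_j(\Omega_{m_1}^{k_1}))$. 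Hence the composite $(F'_j)^{-1}\circ(f_j|_{F'_j(\Omega_{m_1}^{k_1})})^{-1}$ you need is simply $f_1^{-1}$, which is efficient by hypothesis. The rest of your reduction then goes through: the membership test, a fixed output on failure, and data processing for total variation.

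One further qualification, which the paper states explicitly and you omit: $\iota_{j-1}\circ\cdots\circ\iota_1$ is efficient ``by condition 3(a)'' only when the number $j$ of composed maps is polynomially bounded. For superpolynomial $j$ the reduction is not polynomial-time, which is why P3 is claimed only for polynomial-size models.
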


Before providing the proof, we first remark on the reasons for such a complicated construction. More precisely we briefly highlight the technical difficulties in achieving the MUH property.
Concerning hardness, the anchor we use is the assumption of the hardness of sampling from the bare boson sampler. However, the readout map changes the situation a bit. Specifically, non-injective readout maps can provably reduce the hardness (e.g., a trivial readout map which projects all outputs to a single output gives a trivial-to-sample-from distribution).

Conversely, injectivity, together with additional computational assumptions, i.e., efficiency of $f$ and its (partial) inverse would ensure hardness: in this case we can apply $f^{-1}$ to the output space of the full model to recover the distribution from the underlying boson sampler. 

The problem is, universality demands surjectivity. However, the output set of boson samplers and the set of bitstrings cannot be equipotent, except in extreme cases where $k\in \{0,1,m-1,m\}$, which are not suitable for hardness arguments.
The solution we utilize is to embed a ``smaller'', but still hard, boson sampler (with output space cardinality below $2^n$) into a larger one, and insist that the readout map be injective when restricted on outputs produced by this embedding; efficient inverse there together with the capacity to decide whether the output stems from an embedded smaller interferometer suffices then for hardness. These arguments motivate conditions 1 and 5.
At the same time we can use the embedding property to prove monotonicity of expressivity, provided the capacity of the readout map to maintain compatibility as the sizes of the underlying interferometers grow.

Now we can provide a sketch of the proof of the theorem.

\begin{proof}[Sketch of proof]

Regarding Property P2, universality at large hyperparameter settings, this will follow immediately from 6.
The idea is as follows. If we consider a linear optical network with $2^n$ modes specified by an arbitrary $U\in \mathcal{U}(2^n)$, given a Fock state with one photon in the first mode on input we see that the output is $\ket{\psi_{\mathrm{out}}} = \sum_{k=0}^{2^n-1} U_{k1} a^\dagger_k \ket{0}$ which is an arbitrary quantum state, and thus measurements realize an arbitrary distribution over Hamming-weight-1 bitstring. Choosing the readout map to decode this to binary (by giving the binary specification of the position where the photon was measured) realizes a universal model. We note that this is a special case of a map which outputs the lexicographic rank which we will encounter later again.

 Regarding Property P1, condition 3. above directly implies that $j^{th}$ element of the sequence of the models can express all distributions as the $(j-1)^{st}$, proving inclusion of distributions as we increase the hyperparameter $j$. This does not imply a strict inclusion, but this can be attained as we show in a subsequent example.  Regarding Property P3, the conditions 3.-5. ensure the possibility of a reduction of sampling from the first (assumed to be hard) model to the sampling from the $j^{th}$. Specifically: (i) from condition 3. we have that the distributions parametrized by the $j=1$ model are classically intractable; 
(ii) from condition (4) we have that the model at every $j^{th}$ level can express distributions at every previous level including the first via a sequence of $j$ efficiently computable embeddings; and (iii) from condition (5), given a sampler that can simulate sampling from all parameter settings of the model at the $j^{th}$ level, we can efficiently simulate the sampling at the first level (as $f^{-1}$ has an efficient inverse at the outputs stemming from the sequence of embeddings).

We note that the efficiency of the cumulative embedding (which realizes a mapping from the description of the first model to the $j^{th}$, and which is required to be efficient for the hardness argument to hold) is lost if $j$ is superpolynomial. However in this case we already are considering models with superpolynomially many parameters, which are inherently not efficient to compute, classically or quantumly.\end{proof}

We note that \Cref{thm:sufficient conditions} provides sufficient conditions on the family of boson samplers and readout maps which are rather abstract, and at the face of it, it is not clear whether they can be satisfied. 
Next we prove that they can be satisfied, even with a strict increase in expressivity (strict inclusion of distribution families) via a more explicit construction

\begin{lemma}[Existence of MUH family via gradual bleed-mode construction]\label{LM3}
     For each target output size $n$, there exists a sequence of EBSBM models $\{\mathcal{Q}^{\star}(m_j(n), k_j(n))\}_{j=1}^{J(n)}$ specified with a sequence of boson samplers and a sequence of  readout map $f_{j,n}:\Omega_{m_j(n)}^{k_j(n)}\to\{0,1\}^n$ which satisfy the 6 conditions of  \Cref{thm:sufficient conditions}
.
\end{lemma}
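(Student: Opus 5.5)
The plan is to build the tower explicitly. Every level uses the \emph{lexicographic-rank readout} hinted at in the sketch of proof of \Cref{thm:sufficient conditions}, and the levels are chained by a \emph{bleed-mode} embedding: going from level $j$ to level $j+1$ adds fresh modes, and possibly one fresh photon that is parked in a dedicated "bleed" mode. Concretely, I fix a hard base level with $k_1=k$ and $m_1\gtrsim k^2$ chosen as the largest value with $\binom{m_1}{k}\le 2^n$. Since $\binom{m+1}{k}/\binom{m}{k}=(m+1)/(m+1-k)<2$ in the dilute regime, this choice gives $2^{n-1}<\binom{m_1}{k_1}\le 2^n$, which is condition~1. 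Condition~4 is then the standard Aaronson--Arkhipov hardness assumption for $q_{m_1,k_1,U}$, taken as given.

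For condition~3, I let $\iota_j(U)=U\oplus V_j$, where $V_j$ acts on the added modes and, if a photon is added, routes it deterministically into its bleed mode. The map $R_j$ then acts on outcome strings by appending a fixed pattern: the old bits, followed by zeros on the added modes, with a $1$ on the bleed mode if there is one. Block-diagonality and the permanent formula \eqref{eq:bs-permanent} give $q_{m_{j+1},k_{j+1},\iota_j(U)}=(R_j)_\# q_{m_j,k_j,U}$ directly: the permanent factorizes as $\mathrm{Per}(U_{S,[k_j]})\cdot 1$. Both $\iota_j$ and $R_j$ are clearly efficient and injective. The readout $f_j(\bm s)$ will be the $n$-bit binary expansion of $\operatorname{rank}_j(\bm s)\bmod 2^n$, where $\operatorname{rank}_j$ is a colexicographic rank on $\Omega_{m_j}^{k_j}$ ordered so that the strings in $F'_j(\Omega_{m_1}^{k_1})$ come first. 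With colex order and appended-zero or fixed-bleed embeddings, the image of the previous level is an initial segment, so ranks are preserved. This gives $f_{j+1}\circ R_j=f_j$ and makes $f_j$ a bijection from $F'_j(\Omega_{m_1}^{k_1})$ onto $\{0,\dots,\binom{m_1}{k_1}-1\}$. Ranking and unranking of $k$-subsets are polynomial-time by the combinatorial number system, and membership is simply the test "rank $<\binom{m_1}{k_1}$ and bleed pattern correct". This settles condition~5. Surjectivity in condition~2 follows once $\binom{m_j}{k_j}\ge 2^n$, because the rank then reaches every residue.

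For condition~6, I take the top level to contain a block of $2^n$ modes that receives one photon routed from an input mode, while the remaining $k_J-1$ photons sit in bleed modes. Reading off the position of that photon via the rank then decodes bijectively to $\{0,1\}^n$, which is exactly the universal single-photon construction. To make this compatible with the chain, I would route the "active" photons of the base interferometer into bleed modes one at a time as $j$ grows, so that each level can express the previous one but the last level spends its freedom on a single photon over $2^n$ modes. Strictness of the inclusions would come from exhibiting at each level a distribution, supported on newly reachable ranks, that the previous level cannot produce.

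The main obstacle is reconciling the cardinality window $2^n\le\binom{m_j}{k_j}\le 2^{n+1}$ of condition~2 with condition~6. With $k_J\ge2$ photons and at least $2^n$ modes, $\binom{m_J}{k_J}$ far exceeds $2^{n+1}$, and each added mode or parked photon multiplies the count. I would first try to resolve this by letting the count window refer to the \emph{effective} outcome set, namely the strings with the prescribed bleed pattern, on which the readout is defined and onto which the embedding maps. If that is not admissible, I would fall back on taking $k_j$ nonincreasing in $j$, ending at $k_J=1$ and $m_J=2^n$. In that case I would realize the compatibility of condition~3 only on the images $F'_j$, which is what the proof of \Cref{thm:sufficient conditions} actually uses, and check carefully that the pushforward identity can still be met physically.
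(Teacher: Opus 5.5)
Your ingredients are the paper's: a window-chosen hard base level, padding and bleed-mode embeddings under which the permanent factorizes, a rank readout that is bijective and efficiently invertible on $F'_j(\Omega_{m_1}^{k_1})$, and a top level where one photon runs through a $2^n$-mode block while the others are parked. But the proposal does not close. You never fix the sequence $(m_j,k_j)$, and the decisive step, meeting conditions 2 and 6 at once, is left as a choice between two tentative options.

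The obstruction you found is genuine. For $n\ge 2$, condition 6 with $k_J\ge 2$ forces $m_J\ge 2^n+k_J-1$, hence $\binom{m_J}{k_J}\ge\binom{2^n+1}{2}>2^{n+1}$. The paper does not overcome it either. Its tower takes $k_2=k_1$ with $m_2$ minimal such that $\binom{m_2}{k_1}\ge 2^n$. It then adds one photon per level, parked in a fresh bleed mode ($k_{j+1}=k_j+1$, $b_j\ge k_j-1$), while the working modes grow to $M_J=2^n$, and it imposes only $\binom{m_j}{k_j}\ge 2^n$. That lower bound is all that surjectivity of $f_j$ needs, and the upper bound $2^{n+1}$ plays no role in the proof of Theorem~\ref{thm:sufficient conditions}. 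So the option to commit to is simply reading condition 2 as this lower bound plus surjectivity; no ``effective outcome set'' reinterpretation is needed. Nor do photons have to be parked ``one at a time''. Parking is a choice of $U$ at level $J$; what has to grow structurally is the number of working modes (to $2^n$) and the bleed capacity (to at least $k_J-1$).

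Your fallback fails. With $k_{j+1}<k_j$, the level-$(j+1)$ sampler cannot host a copy of the level-$j$ circuit, because passive optics conserves photon number. Condition 3 would then need some other $\iota_j$ reproducing $(R_j)_\# q_{m_j,k_j,U}$ for every $U$, and you give none; for $k_{j+1}=1$ its first column would have to encode every $|\mathrm{Per}(U_{S,[k_j]})|$. Imposing compatibility only on $F'_j(\Omega_{m_1}^{k_1})$ does not give inclusion of level $j$ in level $j+1$, so P1 is lost. This decreasing-$k$ tower is exactly the paper's separate direct-interpolation construction, where the paper concedes there is no inter-level embedding.

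Two smaller repairs are also needed. First, the added photon enters at input mode $k_j+1$, inside $U$'s block, so $U\oplus V_j$ must be precomposed with a permutation sending that mode to the bleed mode. Second, colex order does not make a bleed image an initial segment: a $1$ appended in the top mode shifts colex ranks by $\binom{m_{j+1}-1}{k_j+1}$. You would need the recursive order or a level-dependent offset. You would then have to check that the $2^n$ strings of condition 6 avoid $R_{J-1}(\Omega_{m_{J-1}}^{k_{J-1}})$ and receive distinct residues.

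The simplest fix is to add no photons at all. Take $k_j\equiv k_1$, padding embeddings only, and pure colex rank mod $2^n$. Then conditions 3 and 5 are immediate and the tower stays dilute. At $m_J=2^n+k_1-1$, with $k_1-1$ photons parked in the top modes, the free photon at position $p$ has colex rank $p-1$ plus a constant, which gives condition 6. The paper's extra photon per level is what its strict-inclusion Corollary~\ref{COR3} later exploits.
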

\begin{proof}
We omit the index $n$ for clarity unless explicitly needed.
We will define a sequence of boson samplers specified by number of modes $m_j$ and photons in the initial state $k_j$ as follows.    
\begin{enumerate}
    \item Base level. The pair $(m_1,k_1)$ chosen such that it is in the (diluted) collision-free regime and satisfies $2^{n-1}<\binom{m_1}{k_1}\le 2^n$. Moreover, we choose $(m_1, k_1)$ be such that the sequences $(m_1(n), k_1(n))$ relative to the size of the ouptut distribution specify a family of hard-to-simulate boson samplers, while $m_1(n)$ is still just polynomial in $n$. We comment on the existence of such a sequence later. 

    \item Threshold crossing at level $2$. One sets $k_2:=k_1$ and chooses $m_2>m_1$ minimal such that $\binom{m_2}{k_2}\ge 2^n$.

    \item Gradual growth for $j\ge 3$. For each $j\ge 3$, 
    
    the pair $(m_j,k_j)$ remains in the dilute regime, satisfies $\binom{m_j}{k_j}\ge 2^n$,
    and admits a decomposition $m_j=M_j+b_j$, where $M_j$ is the number of \emph{working modes} and $b_j$ is the number of \emph{bleed modes} satisfying $b_j \geq k_j-1$. .
    
    The sequence $(M_j)_{j=3}^J$ is non-decreasing, with $M_j<2^n \quad (j<J) $ 
    and $
    M_J=2^n.$ Finally we set  $k_{j+1} = k_{j}+1$ for $j>1$
    \end{enumerate}
Before specifying the readout map family, let us explain the construction intuitively. 
Points 1. and 2. from the specification above directly match the condition 1. and 2. from Theorem \Cref{thm:sufficient conditions} and have been motivated earlier. 
The bleed modes will enable an explicit and straightforward embedding of a smaller boson sampler into the larger one.  We use the bleed modes  to reduce the number of photons in ``working modes'', which can then exactly reproduce the smaller beamsplitter.
At the same time, the addition of a new photon into the whole system and which we can choose not to move to the bleed modes in general, allows for a larger space of distributions, if the postprocessing/readout map is chosen correctly. 
Further at each step we also increase the number of non-bleed (working) modes.
While the addition of new working modes is not neccessary to have monotonicity (bleed modes and more photons suffice), the addition of the models allows us to give a simple proof of exact universality when $j$ becomes exponential.

More precisely 1-3. allow for the following embedding maps. The boson sampler at level $1$ embeds into level $2$ by mode padding:
    \begin{equation*}
        \iota_1(U)=U\oplus I_{m_2-m_1},
    \end{equation*} 
    with the outcomes embedding as
        $R_1(x)=x\|0^{\,m_2-m_1}$, where $\|$ denotes concatenation of strings \footnote{In other words, we simply embed the smaller interferometer into the larger, and simply set it up such that it forwards all the new modes to the last modes of the output. These will always be vacuum modes when embedding the model from the first step.  }. 
Other boson sampler embeddings $\iota_j(U)$ are defined as follows, operatively: given $k_{j+1}$ photons at level $j+1$, the interferometer is set up such that the $k_{j+1}^{th}$ photon is rerouted to the new (and last) bleed mode. What remains is a system with  $k_{j}$ photons, and $m_{j+1} -1 \geq m_{j} $ modes available for optical processing. Now we simply realize the smaller boson sampler in the available modes modes via direct copying of linear-optical operations. Via this construction, if the smaller boson sampler outputs the bitstring $x$, the larger outputs $x \| \underbrace{0 \ldots 0}_{l_j} \| 1$, where $l_j = m_{j+1} - m_{j-1}-1$. This also defines the readout embedding maps: $$R_j(x) = x \| \underbrace{0 \ldots 0}_{l_j} \| 1$$

Next, we will provide a specification of the readout maps. Our specification will be again somewhat abstract and will only specify the sufficient conditions (and when not obvious, explain why they can be met) on the maps which prove the Lemma, leaving the possible realization choices for a later discussion. 

We begin with $f_1: \Omega_{m_1}^{k_1} \rightarrow \{0,1\}^n$ which is just any efficiently computable injective map, with an efficient inverse, and where the membership in $f(\Omega_{m_1}^{k_1})$ can be efficiently decided. A simple example proving realizability is the map which outputs the lexicographic rank of the $m_1$-bit Hamming weight $k_1$ bitstring given on input in binary. Since $\binom{m_1}{k_1} \leq 2^n$, this map exists. 

For $j>1$, the maps $f_j$ need to satisfy conditions 3 and 5. 
Condition 3 is a readout compatibility constraint, $f_{j+1} \circ R_j = f_j$. 
Whenever $f_j$ is a feasible map, this is always possible to achieve by using the above relation as the definition of the action of $f_{j+1}$ on the subset of inputs $R_j(\Omega_{m_j}^{k_j}) \subseteq \Omega_{m_{j+1}}^{k_{j+1}}$.
Next, Condition 5 requires us to consider the specification of the maps $F'_j$ which are a composition of all output embedding maps $R_j$. In our case it is simple: for an $x \in \Omega_{m_1}^{k_1}$ we have $F'_j(x )=  x \| 0\ldots 0 \| \underbrace{1\ldots 1}_{j-1}$. 
So now we need to define $f_j$ to be bijective when restricted on bitstrings of the form $ \{ x \| 0\ldots 0 \| \underbrace{1\ldots 1}_{j-1} \ \ \ \vert x \in \Omega_{m_1}^{k_1}\} \subseteq \Omega_{m_j}^{k_j}$. A simple example is again choosing this map to output the lexicographical rank of $x$ in binary, which will also be consistent with the compatibility Constraint 3 when chained.  
In this case it is easy to see that (on this set of inputs), the map $f_j$ is efficient, with an efficient inverse and an effcient membership check. 

Condition 4 follows explicitly from the point 1 in the Lemma.

What remains to check is Condition 6 but this follows straightforwardly. At $J=2^n$ we have more than $2^n$ ``work'' modes (as we added at least one at each step, and started at $m_1$), $k_J$ photons, and at least $k_J -1$ bleed modes. Thus we can realize a single photon $2^n$ mode interferometer in the first $2^n$ modes, while rerouting all the other photons to the bottom modes. It suffices to fix $f_J$ such that on inputs which have only one photon in the top $2^n$ modes it outputs the lexicographic rank in binary. 

Finally, we remark on the consistency between  condition 4 (toward hardness) and the choice of $m_1,k_1$ such that $2^{n-1} \leq \binom{m_1}{k_1} \leq 2^{n}$.

While the first harness analyses of sampling from boson samplers assume particular relationship between $m$ and $k$ as parts of proof techniques, e.g., $m \approx n^5 + \text{polylog}(n)$, we can safely assume hardness in a much more wider set of relationships between the photon number and mode number, in the particular case when we talk about \textbf{worst-case hardness}, which is what we focus on here.

Suppose (worst-case) hardness holds for boson samplers with $k$ photons over $m$ modes in a regime defined by some relation $k=h(m)$. Assuming the non-collapse of the polynomial hierarchy, and additional standard assumptions for the hardness arguments needed in the noisy setting \cite{aaronson2011computational}, this implies that there is no classical sampling algorithm with a polynomial run-time in $m$, when $k$ is set as $h(m)$ in the worst case, i.e., for all such instances. 
Consider now a perturbed setting where there exists a polynomial-time algorithm which can achieve a polynomial runtime in $m$ robustly in the worst case\footnote{Average-case statements could also be achieved straightforwardly for non-natural distributions of inputs. We make no claims of average-case hardness here however.}, when the photon number is chosen as some other function $h'(m)$.
We then note that we can embed an $(m,h'(m))$-boson sampler into a $(m', h(m'))$-boson sampler if we choose $m'$ such that $h'(m) = h(m')$. Defining $m'=g(m)$, the embedding condition is now $h'(g(m)) = h(m)$. If $h$ and $h'$ are polynomially related, then $g$ is a polynomial (or an inverse of a polynomial). Then, if there exists a polynomial time algorithm for sampling from a $(m,h'(m))$ boson sampler, there exists a polynomial time algorithm for sampling from a $(g(m), h(g(m)))$ boson sampler. Our assumption was that no polynomial time classical sampler for a $(m,h(m))$ boson sampler exists.

This implies that, in the worst case, even for an inverse-polynomial $g$, sampling from any $(g(m), h(g(m)))$ instance cannot be done classically in polynomial‑time in $m$ without contradicting the hardness of the $h(m)$-regime.

In other words, we can assume hardness for a much broader range of relationships between $m$ and $k$, and in this range it is always possible to find  $m_1,k_1$ such that $2^{n-1} \leq \binom{m_1}{k_1} \leq 2^{n}$.

 \end{proof}

The construction in  \Cref{LM3} gives MUH properties, but the monotonicity is not \emph{proven} to be strict -- although since in the limit we reach full universality, the inclusion must be strict at at least one level.   In the next Corollary we give additional conditions on the readout maps which when met guarantee strict inclusion at each step.

\begin{cor}[MUH with strict increase in expressivity]\label{COR3}
    The construction from  \Cref{LM3}  can also achieve strict inclusion, i.e. a strict increase in expressivity at each step. 
\end{cor}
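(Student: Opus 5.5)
Inclusion $\mathcal{Q}^\star(m_j,k_j)\subseteq\mathcal{Q}^\star(m_{j+1},k_{j+1})$ already follows from condition 3 of \Cref{thm:sufficient conditions}: $q^\star_{m_{j+1},k_{j+1},\iota_j(U),f_{j+1}}=(f_{j+1}\circ R_j)_\# q_{m_j,k_j,U}=(f_j)_\# q_{m_j,k_j,U}$. It therefore remains to choose the readout maps $f_{j+1}$ on the complement $\Omega_{m_{j+1}}^{k_{j+1}}\setminus R_j(\Omega_{m_j}^{k_j})$ so that level $j+1$ realizes a distribution that level $j$ cannot. The freedom in the construction of \Cref{LM3} is exactly there: conditions 3 and 5 only constrain $f_{j+1}$ on $R_j(\Omega_{m_j}^{k_j})$ and on $F'_{j+1}(\Omega_{m_1}^{k_1})$, which is contained in it, and surjectivity is cheap.

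The strategy is to separate levels by the support sizes they can realize. First, bound what level $j$ can do. For a fixed readout $f_j$, every distribution in $\mathcal{Q}^\star(m_j,k_j)$ is supported on $f_j(\mathrm{supp}\,q_{m_j,k_j,U})$. The bare output distributions have restricted supports: for instance, a point mass on a single output $\bm s$ requires the $k_j$ input photons to be routed deterministically, which a unitary can always achieve. Hence point masses on any element of $f_j(\Omega_{m_j}^{k_j})$ are realizable at every level, and point masses do not separate levels. I would instead use a \emph{non-realizability witness}. Let $A_j$ denote the set of $x\in\{0,1\}^n$ that have a single-element preimage class under $f_j$ (or are reachable through a specified fiber structure). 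I would design $f_{j+1}$ so that it maps some output $\bm s^\ast\notin R_j(\Omega_{m_j}^{k_j})$ to a string $x^\ast$ whose fiber under $f_j$ is empty, or \emph{excluded} from a targeted distribution. Concretely, fix $x^\ast$ and the uniform-type distribution $p^\ast$ on a pair $\{x_0,x^\ast\}$ that level $j$ cannot produce. Level $j+1$ produces $p^\ast$ via an interferometer that sends one photon into a balanced beamsplitter between two working modes, namely the new working mode and an old one, with the remaining photons parked in bleed modes. This yields exactly two outputs $\bm s_0,\bm s^\ast$ with probability $1/2$ each, and we set $f_{j+1}(\bm s^\ast)=x^\ast$.

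The key step, and the main obstacle, is proving that level $j$ cannot realize $p^\ast$. Surjectivity of $f_j$ for $j\ge2$ makes an empty fiber impossible, so a pure support argument fails. I would therefore exploit the additional hyperparameter instead: as long as $M_j<2^n$, level $j$ has no two-outcome interferometer family hitting every prescribed pair of strings. So I would choose $x^\ast$ such that $f_j^{-1}(x^\ast)$ is contained in outputs requiring a prescribed \emph{collision pattern}. Under the lexicographic-rank readout, any distribution at level $j$ supported on exactly two strings must come from a bare distribution supported within two fibers. Such a distribution is realizable only if those fibers contain outputs related by a single-photon move, that is, outputs differing in two positions, since two-outcome interference among permanents forces this. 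I would define $f_{j+1}$ so that the new working mode produces a pair $(x_0,x^\ast)$ whose level-$j$ fibers contain no such adjacent pair. The hardest sub-claim is this one: that two-point-supported boson-sampling distributions with $k$ photons have supports differing by a single-photon move. I would attempt to prove it via permanent expansion along the rows, and if it fails I would fall back on a dimension-counting argument comparing the $O(m_j^2)$-dimensional image at level $j$ with level $j+1$ (which has strictly more parameters), using semialgebraic dimension for generic readouts. Efficiency of the modified $f_{j+1}$ is unaffected, since the changes occur only off $F'_{j+1}(\Omega_{m_1}^{k_1})$, so conditions 1–6 continue to hold.
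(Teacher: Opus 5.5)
Your inclusion step is correct, and you rightly place the only freedom in $f_{j+1}$ off $R_j(\Omega_{m_j}^{k_j})$. But the heart of your argument, that level $j$ cannot realize the two-point witness $p^\ast$, rests on a claim that is false for this model.

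Because the BSBM postselects on collision-free outcomes, two-point supports need not differ by a single-photon move. Take $k=2$, $m=4$ and orthonormal input columns $\ell_1=\tfrac12(1,1,i,i)^T$, $\ell_2=\tfrac12(1,1,-i,-i)^T$. Then $\mathrm{Per}(U_{\{1,2\},[2]})=\mathrm{Per}(U_{\{3,4\},[2]})=\tfrac12$, while $\mathrm{Per}(U_{\{a,b\},[2]})=0$ for $a\in\{1,2\}$, $b\in\{3,4\}$. So the postselected distribution is uniform on $\{1100,0011\}$; the remaining mass sits on bunched outcomes, which are discarded. Parking extra photons extends this to every $k\ge 2$.

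Independently, a level-$j$ distribution supported on $\{x_0,x^\ast\}$ only forces the bare distribution into $f_j^{-1}(x_0)\cup f_j^{-1}(x^\ast)$, not onto two points. So even a correct two-point lemma would not give your obstruction.

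The fallback does not close the gap either. More parameters do not imply a higher-dimensional image, and ``generic readouts'' has no clear meaning for a fixed discrete map. You would still lack an explicit element of $\mathcal{D}_{j+1}\setminus\mathcal{D}_j$.

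The missing idea is that no explicit non-realizable witness is needed. The paper appends to $\iota_j(U)$ a final beamsplitter that re-injects, with tunable probability $p$, the photon sitting deterministically in the last bleed mode. It then sets $f_{j+1}\equiv b^\ast$ on every outcome outside $R_j(\Omega_{m_j}^{k_j})$, which leaves the compatibility constraint untouched. Level $j+1$ then realizes every mixture $p\,\Delta_{b^\ast}+(1-p)q$ with $q\in\mathcal{D}_j$.

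A short non-constructive lemma shows that if $\mathcal{D}_j$ is not universal, some mixture $p\,\Delta_b+(1-p)q$ lies outside $\mathcal{D}_j$. Otherwise $p=1$ puts every point mass in $\mathcal{D}_j$, and successive shifts toward the support points of any target build that target from a point mass. Choosing $b^\ast:=b$ gives strict inclusion without characterizing what boson samplers cannot produce. This works only at levels where $\mathcal{D}_j$ is not already universal, a caveat your proposal should also state.
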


\begin{proof}
Here, the result will be achieved not just by manipulation of the readout maps, but by using properties of the linear optical network.

Consider the $j^{th}$ and the $(j+1)^{th}$ level in our construction. 
By the embedding constructions all outputs of the $j^{th}$ boson sampler are mapped injectively to outputs of the $(j+1)^{st}$ via map $R_j$, and we know that all distributions that the boson sampling model at level $j$ can be realized in the $j+1$ model up to the renaming/relabeling given by $R_j$. We also know that $R_j(\Omega_{m_j}^{k_j})$ by construction has a photon deterministically in the last bleed mode.
The linear optical network at $(j+1)^{st}$ level which embeds the $j^{th}$ beamsplitter, however, can be augmented, at the very last layer of gates, to re-inject this photon from the last bleed mode to some of the other modes with a desired probability by, e.g., controlling the transmitivity of a beamsplitter \footnote{In principle this could cause a collision, but since we are in a low collision regime overall, a collision will in general be unlikely.}. Call this probability of injecting the photon back $p$.  

This new parameter $p$ causes a mixing of distributions: with probability $1-p$ (not re-injected) we realize the distribution from the set reachable by the boson sampler at $j^{th}$ level. With probability $p$ however, we are guaranteed to return an output not in the set  $R_j(\Omega_{m_j}^{k_j})$ .
The consistency conditions $3$ fix what the map $f_{j+1}$ does to an output  resulting from the case of non-injection. However, for all the inputs to $f_{j+1}$ outside $R_j(\Omega_{m_j}^{k_j})$, we are free to define the output, and this will result in the mixing of two classes of distributions, which intuitively can be larger than either.
We prove this is the case. 

In particular, we will define $f_{j+1}$ to given any output stemming from the mixing event (e.g., whenever the boson sampling measurement is not in $R_j(\Omega_{m_j}^{k_j})$) to output a fixed bitstring $b^\ast\in \{0,1\}^n$ which we fix shortly, albeit non-constructively.

Let $\mathcal{D}_j$ be the set of distributions realizable by the model at the $j^{th}$ level, and assume that it does not include all distributions. Then there exists a bitstring $b\in \{0,1\}^n$, a probability distribution $q \in \mathcal{D}_j$ and a probability $p\in [0,1]$ such that the mixed distribution $p \times \Delta_b + (1-p) \times q \not \in \mathcal{D}_j$. This we prove more generally in the Lemma below.
For our construction, we then choose $b^\ast = b$ as in the claim above. This construction is not in contradiction with any of the 6 properties we achieved earlier as only fixes what the readout maps $f_j$ do outside any of the circumstances considered in \Cref{thm:sufficient conditions}.
This proves that at each level we at least obtain one new distribution which was not accessible before, thereby proving a strict inclusion of the set of distributions we can capture. 
\end{proof}

We now prove the technical lemma used above.

\begin{lemma}
Let $\mathcal{D}$ any set of distributions over $\{0,1\}^n$, and assume that it does not include all distributions. Then there exists a bitstring $b\in \{0,1\}^n$, a probability distribution $q \in \mathcal{D}$ and a probability $p\in [0,1]$ such that the mixed distribution $p \times \Delta_b + (1-p) \times q \not \in \mathcal{D}$, where $\Delta_b$ denotes the Kronecker delta distribution with unit mass on $b$.
\end{lemma}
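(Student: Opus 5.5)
The plan is to argue by contraposition: assume that $\mathcal{D}$ is closed under mixing with point masses, and show that $\mathcal{D}$ must then contain every distribution on $\{0,1\}^n$. The argument needs $\mathcal{D}\neq\emptyset$, since the statement asks for some $q\in\mathcal{D}$. This assumption holds in the application in \Cref{COR3}, because every level $\mathcal{D}_j$ of the tower is nonempty. Concretely, suppose that for every $b\in\{0,1\}^n$, every $q\in\mathcal{D}$ and every $p\in[0,1]$ we have $p\,\Delta_b+(1-p)\,q\in\mathcal{D}$.

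\emph{Step 1: point masses lie in $\mathcal{D}$.} Pick any $q\in\mathcal{D}$ and take $p=1$. The closure assumption gives $\Delta_b\in\mathcal{D}$ for every $b\in\{0,1\}^n$.

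\emph{Step 2: build an arbitrary target by mixing.} Let $r$ be any distribution on $\{0,1\}^n$. Enumerate its support as $b_1,\dots,b_T$ with $r(b_t)>0$; here $T\le 2^n$ is finite because the sample space is finite. Write $S_t:=\sum_{i\le t} r(b_i)$, which is strictly positive for every $t\ge 1$. Define inductively
\[
r^{(1)}:=\Delta_{b_1},\qquad r^{(t)}:=p_t\,\Delta_{b_t}+(1-p_t)\,r^{(t-1)},\qquad p_t:=\frac{r(b_t)}{S_t}\in(0,1].
\]
A one-line induction shows that $r^{(t)}(b_i)=r(b_i)/S_t$ for all $i\le t$. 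Indeed, mixing with weight $1-p_t=S_{t-1}/S_t$ rescales the earlier masses from $1/S_{t-1}$ to $1/S_t$, and the new atom receives mass $r(b_t)/S_t$. Hence $r^{(T)}=r$, since $S_T=1$.

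\emph{Step 3: conclude.} By Step 1, $r^{(1)}\in\mathcal{D}$, and by the closure assumption each $r^{(t)}\in\mathcal{D}$. Therefore $r\in\mathcal{D}$. Since $r$ was arbitrary, $\mathcal{D}$ contains all distributions, contradicting the hypothesis. So some triple $(b,q,p)$ must witness $p\,\Delta_b+(1-p)\,q\notin\mathcal{D}$.

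The argument is elementary. The only points needing care are the nonemptiness caveat and the choice of weights $p_t$ so that they stay in $[0,1]$ and the partial sums $S_t$ are nonzero; restricting the enumeration to the support of $r$ handles the latter. The obstacle I would flag is the nonemptiness caveat: the statement as written fails for $\mathcal{D}=\emptyset$, so I would state it explicitly as an assumption.
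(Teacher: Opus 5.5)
Your proposal is correct and follows the paper's argument: contradiction, point masses obtained via $p=1$, then any target reached by successive shifts, with your weights $p_t=r(b_t)/S_t$ making precise the paper's informal ``match the relative masses'' step (the paper also orders the support by probability, which you rightly see is unnecessary). Two of your details correct slips in the paper's proof. The paper says $n-1$ shifts suffice, but the correct count is $|\mathrm{supp}(r)|-1\le 2^n-1$, as your $T\le 2^n$ reflects. And your nonemptiness caveat is warranted, since the paper's step ``pick any $b$ and take $p=1$'' tacitly needs some $q\in\mathcal{D}$, and the statement fails for $\mathcal{D}=\emptyset$.
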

\begin{proof}
By contradiction. Assume the opposite. This implies that for all $b$,  $\Delta_b \in \mathcal{D}$, by picking any $b$ and taking $p=1$. 

Given a $q \in \mathcal{D}$, a  $b\in \{0,1\}^n$ and a probability $p$, let us call the mapping $q \rightarrow p \times \Delta_b + (1-p) \times q$ \textit{the shifting of $q$ in the direction of $b$ by p}.
Assuming the opposite, it follows that $D$ is closed under any set of shifts in any bitstring direction by any $p$.
However it is easy to see that starting from a suitable Kronecker delta (on any bitstring), we can realize any distribution $q^\ast$ by a sequence of $n-1$ shifts. To see this, we think of distributions as probability vectors. Fix an arbitrary target $q^\ast$ . We start in a Kronecker delta in a mode of $q^\ast$, and then we sequentially `add' weight to every element of the support, from most probable to least probable breaking ties arbitrarily. Specifically, at each step we add sufficient mass to a new element to ensure that the relative mass between the already present elements and the new element matches the same ratio as in the target distribution. This converges to the target in $n-1$ steps.
But then $\mathcal{D}$ includes all distributions, which is a contradiction. This means there must exist at least a single distribution for which a shift in at least one direction by some amount is not an element of $\mathcal{D}$.
\end{proof}

We note that the proof above is not constructive, as we do not provide an efficient algorithm to find such a $b$ at each step. At the same time, we conjecture that for almost all generic specifications of what $f_j$ does at each stage, we will have an increase of expressivity. 

The constructions above are all based on \Cref{thm:sufficient conditions}, which provides sufficient but rather convoluted conditions to achieve the desired MUH properties. They are interesting from a purely theoretical standpoint, but not necessary. Next we give a much simpler construction which is (arguably) more elegant, for which we do not have a (strict) inclusion property to achieve Property 1 (increase of expressivity), but it is likely on other grounds.
Since we do not have a proof of strict inclusion, hardness is also proven in a different way than simply leveraging \Cref{thm:sufficient conditions}.

This construction also starts off at a base level which is an assumed-to-be hard beamsplitter with suitable mode number and photon number $(m_1, k_1)$. The end of the tower is the single-photon construction with exactly $m_J=2^n$ modes and $k_J = 1$ photon. All the intermediary levels are obtained by a straightforward interpolation subject to constraint on size of the output space. 

\subsection{Construction by direct parameter interpolation}
We first start by specifying the sequence of boson samplers: for $j=1$ we define $(m_1, k_1)$ to be such that classical sampling is likely intractable from a model with $(m_1-1)$ modes and $k_1-1$ photons. 

For all $j>1$ we then set $k_j = k_{j-1} -1$, which ensures that at the end of the tower we have just one photon.
$m_j$ are chosen relative to $k_j$ such that for all $j>1,$
$\binom{m_j-1}{k_j-1} \le 2^n \le \binom{m_j}{k_j} \leq 2^{n+1}$.

We note that if $2^n \le \binom{m_1}{k_1} \leq 2^{n+1}$ holds then there exists a sequence of $m_j$ which satisfy this condition \footnote{The increase of the binomial coefficient by moving from $m$ to $m+1$ is very slight so this follows.}.
Combined with the remarks about the robustness of hardness arguments at the end of the proof of Corollary \ref{COR3}, it is safe to assume such a choice exists which is consistent with hardness assumptions. 
In other words, at each step we have the output space such that it allows a surjection onto $\{0,1\}^n$, whereas if we fix one photon to occupy the last mode, we can have an injection to $\{0,1\}^n$.
Now we can explicitly define the family of readout maps.

We are assuming $\binom{m-1}{k-1} \le 2^n \le \binom{m}{k} \leq 2^{n+1}$. 
Let $\Omega_{m}^{k}:=\{s\in\{0,1\}^m:\ |s|=k\}$.

Define the readout \(f_j:\Omega_{m_j}^{k_j}\to\{0,1\}^{n} \) as follows (we will omit the subscripts $j$ to deload notation):

\begin{equation}
    f(s)=
    \begin{cases}
        \mathrm{bin}_n \bigl(\mathrm{rank}_{m-1,k-1}(s_{1:(m-1)})\bigr),
        & s\in E_1,\\
       G( \mathrm{bin}_{n+1} \left(\binom{m-1}{k-1}+\mathrm{rank}_{E_1}(s))\right),
        & s\in \Omega_{m}^{k} \setminus E_1,\\
     
    \end{cases}
\end{equation}
where $E_1:=\{s\in\Omega_{m}^{k}: s_m=1\}$.

Here, $\mathrm{rank}_{E_1}: \Omega_{m}^{k} \setminus E_1\rightarrow \{0, \ldots , \binom{m}{k} - \binom{m-1}{k-1} -1 \}$ is the lexicographic rank of the input bitstring but skipping all the elements in $E_1$. 
In general  $\binom{m-1}{k-1}+\mathrm{rank}_{E_1}(s)$ can be larger than $2^n$ and we wish $n-$bit outputs, so the map $G$ compresses the output to $n$ bits, e.g. by ignoring the most significant bit.

It is instructive to explain what $f$ does in words. To ensure hardness, we will embed a $(m-1,k-1)$ interferometer into a larger boson sampler, and reroute one photon to the last ($m^{th}$) mode.
Now all bitstring outputs that end with a 1 in the last position correspond to possible valid outputs of the embedded interferometer - this corresponds to the set $E_1$ -  and we ensure $f$ is injective and bijective when restricted on them. This will allow us to conclude the inheritance of hardness of sampling from this model\footnote{We note that `inheritance' does not mean it is inderited from one level of the family of models to another, but rather that it is inherited from the hardness of sampling from boson samplers.}. 
We also note that at the top of the tower, we have $m=2^n$ and $k=1$. In this case, $E_1$ is a singleton (bitstring with the 1 in the last mode) which returns the all-zero bitstring.
For all other bitstrings it will return exactly the binary encoding of the position of the 1, ensuring universality.

We capture these properties with the following lemma:

\begin{lemma}
The construction by direct parameter interpolation inherits hardness from boson samplers for those sections of the construction which allow a polynomially-sized description. 
\end{lemma}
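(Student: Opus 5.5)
The plan is a direct worst-case reduction: from a classical sampler for any level $j$ of the interpolation tower we build a classical sampler for an $(m_j-1,k_j-1)$ boson sampler, with only polynomial overhead, for every level whose description is polynomially sized. Fix such a level with parameters $(m,k)=(m_j,k_j)$. Suppose a classical algorithm $\mathcal{A}$ samples, for every $U\in\mathcal{U}(m)$, from $q^{\star}_{m,k,U,f}$ in time polynomial in $n$ (equivalently in $m$), possibly up to small total-variation error $\epsilon$.

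Given an arbitrary $V\in\mathcal{U}(m-1)$, we build the embedding unitary $\iota(V)$ on $m$ modes. It routes the input photon of mode $k$ to the last mode $m$, and applies $V$ to modes $1,\dots,m-1$ with the remaining $k-1$ photons. Concretely, this is $(V'\oplus 1)$ composed with the fixed permutation that moves mode $k$ to mode $m$ and shifts the others; here $V'$ is $V$ with the column ordering adjusted so that the photons sit in its first $k-1$ inputs.

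Because the photon in mode $m$ never interferes, the permanent factorizes as a block-diagonal permanent. This gives $q_{m,k,\iota(V)}(s)=q_{m-1,k-1,V}(s_{1:m-1})$ when $s_m=1$, and $0$ otherwise. So the raw distribution is supported on $E_1$ and equals the pushforward of $q_{m-1,k-1,V}$ under $x\mapsto x\|1$. Collision-freeness is preserved, since the postselection event is the same.

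We then use the first branch of the readout. On $E_1$ the map $f$ is $s\mapsto \mathrm{bin}_n(\mathrm{rank}_{m-1,k-1}(s_{1:m-1}))$, which is injective because $\binom{m-1}{k-1}\le 2^n$. Hence $q^{\star}_{m,k,\iota(V),f}$ is exactly the image of $q_{m-1,k-1,V}$ under an injection with image $\{0,\dots,\binom{m-1}{k-1}-1\}$ in binary. The classical simulator runs $\mathcal{A}$ on $\iota(V)$ and obtains $y$. If $y$ encodes an integer $r<\binom{m-1}{k-1}$, it outputs the unranked string $\mathrm{unrank}_{m-1,k-1}(r)$. Otherwise (which happens only with probability at most $\epsilon$) it outputs an arbitrary fixed element.

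Lexicographic ranking and unranking of weight-$(k-1)$ strings are computable in time $\mathrm{poly}(m)$ via the combinatorial number system, with binomials of $O(n)$ bits. Computing $\iota(V)$ is an $O(m^2)$ matrix operation. Post-processing cannot increase total-variation distance, so $\epsilon$-approximate sampling at level $j$ yields $\epsilon$-approximate sampling of the $(m-1,k-1)$ boson sampler.

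The remaining step, which I expect to be the main obstacle, is the hardness assumption itself. At $j=1$ the choice of $(m_1,k_1)$ directly assumes hardness of $(m_1-1,k_1-1)$. For $j>1$ the embedded instance $(m_j-1,k_j-1)$ has fewer photons and falls in a different $k$-versus-$m$ relation. Here I would invoke the robustness argument at the end of the proof of \Cref{LM3}: worst-case hardness transfers between polynomially related regimes $k=h(m)$ and $k=h'(m)$.

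This is precisely where the restriction ``sections allowing a polynomially-sized description'' enters. Hardness is meaningful only while $m_j$ is polynomial in $n$ and $k_j-1$ grows fast enough that the relation stays within the assumed hard regime. Near the top of the tower, $k_j$ becomes constant and $m_j$ exponential, so both the description and the hardness claim lapse, and the statement makes no claim there.
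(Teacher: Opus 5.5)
Your proposal is correct and follows essentially the same route as the paper: route one photon deterministically to the last mode so that an arbitrary $(m_j-1,k_j-1)$ boson sampler is realized on $E_1$, use that $f_j$ restricted to $E_1$ is injective with an efficiently decidable image and an efficient inverse (rank/unrank), and invoke the robustness-of-worst-case-hardness argument for polynomially related $(m,k)$ regimes on the part of the tower with polynomial description size. You are more explicit than the paper on the permanent factorization, the preservation of collision-free postselection, and the total-variation post-processing step. You also correctly identify the embedded instance as $(m_j-1,k_j-1)$, which is what the first branch of $f_j$ inverts; the paper's proof instead calls it $(m_{j-1},k_{j-1})$ with $k_{j-1}=k_j-1$, which clashes with the tower's rule $k_j=k_{j-1}-1$.
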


\begin{proof}
We note that in the construction the photon number $k,$ starting from superlinear in $n$ decreases linearly with $j$, whereas the size of the output space remains above $2^n$. This means $m_j$ grows rather rapidly, and in particular, in the regions of the tower where $k$ is sublinear in $n$ (e.g.\ constant), the mode number must be superpolynomial. In these regions, the interferometer is too big to even be efficiently specified (in the general case), and we will not analyze hardness of simulation there.

The practically relevant regimes are those where the mode number is polynomial. In these cases, the photon number must be at least linear in $n$, and by arguments of robustness in the proof of Corollary \ref{COR3}, there the hardness follows. But this also holds for shifts: if the family (now scaling with $n$) of boson samplers $(m(n),k(n))$ is classically intractable to simulate, then so is $(m(n)-1,k(n)-1)$ (by similar arguments as in the proof of Corollary \ref{COR3}).

Then for each $j$ in the relevant region, we note that by restricting the $(m_j,k_j)$-boson sampler settings which output a photon deterministically in the last mode, we realize an injective embeddding of a $(m_{j-1},k_{j-1})$ boson sampler with the output embedding $x  \stackrel{R}{\rightarrow} x || 0\ldots 0 || 1 $, where $x$ is Hamming weight $k_{j-1} = k_{j} -1$ similar to previous constructions. Then we notice that the mapping $f_j$ when restricted to bitstrings of the form $x || 0\ldots 0 || 1$ is an injective mapping onto $\{0,1 \}^n$, and moreover it is easy to check if a given $n$-bit bitstring is from the codomain of this restriction of $f_j$, and the inverse is efficient as well.
This means that the capacity to sample from the model at the $j^{th}$ level efficiently implies the capacity to sample efficiently from all $(m_{j-1},k_{j-1})$ boson samplers, and that is in contradiction with the assumption. 
\end{proof}

The hardness proof for this construction is more local: at each level, we show that a hard smaller boson sampler embeds into a restricted subset of the larger model, and require the postprocessing to be injective/invertible on that restricted subset. That gives hardness without needing the full tower-wide inheritance theorem.

The only remaining question is whether expressivity increases monotonically with the tower level $j$. For this construction, it is easy to see that a strict inclusion is not realized due to differences of cardinalities of output spaces, and due to a lack of an explicit embedding on the level of the boson samplers. 

In absence of strict inclusion, the idea of increasing expressivity becomes more challenging to define as one must commit to an expressivity monotone (a mapping from distribution family to $\mathbb{R}$ which is non-decreasing under inclusion). We do not venture in this direction in this work, but rather provide the following intuition: For each choice of structural hyperparameters, the model defines a smooth map from the unitary parameter space to the probability simplex. As the number of free parameters grows along the sequence, the effective dimension of the set of realizable distributions should also increases, which provides an intuitive sense in which expressivity should improve. A rigorous geometric argument, however, as well as a precise definition of an expressivity measure, is left for future work.

\subsection{Efficient classical training}
\label{sec:training}
In the previous sections we just focused on the aspects of the models as generators of distributions. 
In this section we consider how the shift from the bare boson sampling model to the extended model changes how the model can be trained. 

For clarity when talking about efficient training, we only refer to the \textit{capacity to classically estimate the loss and its gradients}; we make no statements on the trainability of the model in terms of number of e.g. vanishing gradients, or otherwise rate of convergence too good loss minima.

For the bare sampling model the key idea was that an MMD-type loss can be computed from the quantum model by estimating expectations of parities out outcomes.
In the extended model, we are dealing however with the push-forward distribution obtained by applying the readout map $f$ onto the output of the boson sampler. One approach to raining via MMD-type loss would require the capacity to evaluate parity expectations of the whole model efficiently. However, since the readout map is in general not linear, it does not commute with taking expectation values, and so the capacity to compute expectations of the bare boson sampler does not translate to capacity to efficiently estimate the expectations of the push-forward distribution. 

Here we take another approach:   we introduce a lifting procedure that maps an $n$-bit dataset into $\Omega_{m_j}^{k_j}$ so that training can be performed in the raw space.

\subsection{Pushforward, non-identifiability, and lifting}

Let $\Omega_{m}^{k}$ be output space of the bare boson sampler the and let $f:\Omega_m^k\to\{0,1\}^n$ be a readout map.
With $P_U(s)$ we denote the distribution over $\Omega_{m}^k$ of the outputs of the boson sampler ($U$ collects all the free parameters), and with 
\[
Q_U := f_\# P_U \quad \text{on }\{0,1\}^n,
\qquad Q_U(y)=\sum_{x:\,f(x)=y} P_U(x).
\]
we denote the final (pushforward) distribution. 

Instead of training the model in the final output space, we consider training in the space  of the bare boson sampling outputs.

To train in $\Omega_m^k$, we choose a (deterministic or stochastic, as we explain shortly) a right-inverse of $f$ denoted $f^{-1}$ which is such that $f(f^{-1} (x)) = x$, for all $x$ in the codomain of $f$.

The idea is then to map (or ``lift'') the training data via the map $f^{-1}$ and then train the bare boson sampling model on the lifted data. 

Such an $f^{-1}$ is not unique. More generally, we can define the global inverse $F: \{0,1 \}^n \rightarrow \mathcal{P}(\Omega_m^k)$, where $\mathcal{P}$ denotes the power set, and $F(x)$ is the set of all $y$ such that $f(y)=x$.
Each $f^{-1}$ is realized by picking one representative from each set $F(x)$.

\paragraph{Deterministic lift.}
For a deterministic lift to be possible, $f$ must allow at least one efficient (right-)inverse $f^{-1}$.
In the constructions we provided in previous sections, we already required $f$ to have an efficient inverse on parts of the output space to achieve hardness inheritence. For training we need the inverse to be efficient everywhere. In all the constructions we provide this is possible.
We note that we can use a different inverse for training than is used for arguments of hardness. 

\paragraph{Interpreting training via a deterministic lift.}
Recall that the standard classical training of the bare boson sampling model corresponds to training with respect to some $MMD_\kappa$ metric specified by a kernel $(x,y) \mapsto \kappa(x,y)$, where, technically $x,y \in \{ 0,1\}^m$, even though the model only outputs bitstrings in $\Omega_{m}^k$.  
It is tempting to attempt to express the deterministic lift as a different kernel obtained by composing the bare kernel with the inverse map $f^{-1}$: $k_{f^{-1}}(x,y) \mathrel{:}= \kappa(f^{-1}(x),f^{-1}(y) ) , $ which is now defined on the space $\{0,1\}^n$.
This would allow for a simple grasp of what the training procedure optimizes. However, the procedure we described does not correspond to training with respect to this  $MMD_{\kappa_{f^{-1}}}$ because we are training the bare boson sampler, and not its lifted variant: $P_U(s)$ instead of $f^{-1}\# Q_U(s) =f^{-1}  \# f \# P_U(s)$. We note that $f^{-1}$ is just a right inverse and not the left inverse, so $f^{-1} \circ f$ is not the identity. 
In theory this model is easily realized: we sample from the beamsplitter, apply $f$ and then the chosen $f^{-1}.$ However recall we want classical training so sampling is not an option. 

Consequently the process we described realizes a surrogate, an approximation of the $MMD_{\kappa_{f^{-1}}}$, which is only fully faithful when $f^{-1}$ is a both-sided inverse.
This does not mean that either option is better in practice regarding performance, however, just that we will not be able to easily theoretically analyze the performance of the model.

This situation allows for interesting features: for example it can happen that the training loss is not zero, even though the realized distribution is equal to the empirical distribution of the training set. However this loss is faithful in the sense that if the empirical loss is exactly zero in the limit of large training set sizes, then the distributions will match.

\paragraph{Stochastic lift.}

More generally, likely better training settings could be realized by applying a a stochastic right-inverse: we map each data point $x$ to a sample from the set of possible inverse values $F(x)$. This may be more difficult in practice, but since the bare boson sampling distributions are rather restricted for reasonable model sizes, this approach could allow smaller losses and better convergence.

\paragraph{Training procedure versus hardness.}

In the previous section we made no claims with respect to quantum advantage in generative modelling. Our only claim was that the models we define likely cannot be classically simulated for all parameter settings. 

A valid question arises with respect to the interplay between training and hardness. Specifically, is the training procedure such that it will encounter hard parameter settings in any scenario: converse would imply that even though the models could capture hard distributions, we will never encounter them in practice. 
While we will not prove the statement either way for our model we will briefly discuss possible points of concern.

In previous works studying similar questions \cite{coyle20} the arguments mostly concerned the likelihood random initializations or training steps encounter likely hard parameter settings. Here we consider the question: assuming the model is being trained on samples from a hard distribution (e.g. generated by one of the EBSBM models themselves), will it learn the correct (intuitively hard) parameters. More precisely, we ask: is the situation different than in the case of training raw boson sampling models.  
The problem is that in all our construction the boson sampler undergoes the readout map $f$, however, in all of them we made sure it is injective and invertible on some hard subset of distributions. 

Now, if we choose that same $f^{-1}$ for the lifting procedure \footnote{The lifting map  $f^{-1}$ which we obtain from the hardness constructions is not defined on all bitstrings, as it was only needed to be bijective between the output space of a smaller beamsplitter $\Omega_m^k$, which is in general of smaller cardinality than $2^n$. For our arguments it does not matter how the inverse is generalized to all bitstrings.}, then there is no difference in training the EBSBM model and the raw boson sampler. We note that the resulting model will nonetheless be in principle more expressive than the smaller beamsplitter even when restricting support just to the valid embedded outputs, so this choice is not in general problematic from an expressivity perspective either.
We will analyze other options for training in follow-up works. 

\section{Discussion and open problems} \label{sec:discussion}

In this paper, we have introduced families of boson-sampling-based generative model families, with a hyperparameter controlling the expressivity, which are universal in the large-mode limit and which have the potential for a sampling advantage. Further, we have introduced methods for the classical efficient training of these models as was previously achieved for IQP-based quantum circuit Born machines (IQP-QCBMs).

Our work opens a number of questions. Firstly, the extent to which quantum advantage could be expected from using these models is a matter of some debate. 
Our BSBM models are trained to learn a target Boson Sampling distribution via an MMD loss function, based on parity expectation values. On the other hand, the complexity-theoretic hardness results for approximate Boson Sampling are rather based on total variation distance \cite{aaronson2011computational}. At present it is unclear what can be said about hardness of approximate Boson Sampling, when the approximation is quantified by MMD with respect to specific observables. Specifically, it is unclear whether such models can be ``spoofed'' in the sense of there being classical models which can reach the same MMD distance as the quantum model to any target.
Another key question pertains to the trainability of these modes, but now in the sense of actual magnitude of the gradients, and the potential of the training to find good local minima. Recent work on gradients of IQP-QCBMs \cite{shen26} showed explicit gradients can be bounded in certain cases, and it is an open question if similar can be done for the models presented in this paper.

Furthermore, we have only studied generative models built around the very basic type of boson samplers and, for this reason, our expressivity-increasing and universality constructions are somewhat convoluted.
Simpler universality constructions and interesting classes of more expressive models could arise by utilizing boson samplers with additional enhancements such as with efficient postselection or adaptive measurements. While post-selection on a constant number of modes will not significantly affect our results, a natural open question is to investigate to what extend such minimal  augmentations  preserve or simplify trainability, expressivity and universality, while ideally lowering the size requirements for computational hardness.
Another degree of freedom which we fixed and could be generalized involves the choice of the initial state, where generalized Gurvits-like algorithms are available for estimating expectation values efficiently \cite{LimOh2025}.
Also, when including displacements, postselection allows to emulate more complex inputs states \cite{hamilton2024boson}.
Finally, a natural extension of our work is excluding postselection on collision-free events and generalizing the parity observables to other operators, e.g., as introduced for Gaussian Bosonic Born machines in \cite{kolarovszki2026learning}.

\smallskip

\textit{Note added.} Ref.\ \cite{kolarovszki2026learning} was put out in parallel to this work. During the preparation of the manuscript, we also became aware of \cite{gottlieb2026efficient}.

\section*{Acknowledgements}
UC~thanks H.\ Ollivier and H.\ Thomas for interesting discussions. VD, and AK acknowledge the support from the Dutch National Growth Fund (NGF), as part of the Quantum Delta NL programme. VD acknowledges support from the Dutch Research Council (NWO/OCW), as part of the Quantum Software Consortium programme (project number 024.003.03). This project was also co-funded by the European Union (ERC CoG, BeMAIQuantum, 101124342). ZK and BB would like to thank the support of the Hungarian National Research, Development and Innovation Office (NKFIH) through the KDP-2023 funding scheme, the Quantum Information National Laboratory of Hungary and the grants TKP-2021-NVA-04, TKP2021-NVA-29 and FK 135220. ZZ acknowledges funding from the Research Council of Finland through the Finnish Quantum Flagship project 358878.

\appendix
\section{Estimation of parity word expectation values}\label{app:algorithm}
    In this appendix we give an explicit procedure for estimating parity word expectation values in BSBMs.

    Let us consider a subset of modes defined by $\alpha \in \{ 0, 1 \}^m$ and denote the corresponding parity word supported on the modes given by $\alpha$ as $\Pi_\alpha \coloneqq \bigotimes_{j = 1}^m \Pi_j^{\alpha_j}$.
    Moreover, the input state can be written as
    \begin{equation}
        \ket{\psi_{\text{in}}} = \ket{1^k 0^{m-k}},
    \end{equation}
    to which we apply an interferometer $\hat{U}$ modeled by a unitary matrix $U \in \mathcal{U}(m)$.

    To obtain the expectation values of the parity words for $\hat{U}\ket{\psi_{\text{in}}}$, we can utilize Gurvits' algorithm~\cite{gurvits2005complexity} for approximating the permanent. We can express a parity word expectation value as
    \begin{equation}
         \bra{\psi_{\text{in}}} \hat{U}^\dagger \Pi_\alpha \hat{U} \ket{\psi_{\text{in}}} = \bra{\psi_{\text{in}}} \hat{V} \ket{\psi_{\text{in}}} = \operatorname{Per}(V_{[k], [k]}),
    \end{equation}
    where we denoted $\hat{V} \coloneqq \hat{U}^\dagger \Pi_\alpha \hat{U}$, to which we associate a single-particle unitary matrix $V = U^\dagger \operatorname{diag}(1-2\alpha) U$ with $\operatorname{diag}$ denoting the operation of forming a diagonal matrix from the specified vector. Moreover, $V_{[k],[k]}$ denotes the principal $k\times k$ submatrix indexed by the occupied input modes.

    Gurvits' estimator is based on the identity
    \begin{equation}
        \operatorname{Per}(W) = \underset{x \in \{-1, 1\}^k }{\mathbb{E}} [\mathrm{Rys}_{x}(W)],
    \end{equation}
    where we denoted $W \coloneqq V_{[k], [k]}$ for brevity and
    \begin{equation}
        \mathrm{Rys}_{x}(W) \coloneqq \left( \prod_{i=1}^k x_i \right) \left( \prod_{i=1}^k \sum_{j=1}^k W_{ij} x_j \right),
    \end{equation}
    so that $\mathrm{Rys}_x(W)$ is an unbiased estimator of $\operatorname{Per}(W)$.
    Each sample $\mathrm{Rys}_x(W)$ can be computed in $\mathcal O(k^2)$ time. In our setting, $\lvert \mathrm{Rys}_x(W)\rvert\le 1$, hence averaging $T=\mathcal O(\log(\delta^{-1})/\varepsilon^2)$ independent samples and applying a Hoeffding bound yields an $\varepsilon$-additive estimate with success probability at least $1-\delta$. The resulting overall cost is $\mathcal O\!\big(k^2 \log(\delta^{-1})/\varepsilon^2\big)$. We summarize the full workflow in \Cref{alg:parity_estimation}.

    \begin{algorithm}[H]
    \caption{Estimation of parity word expectation values}
        \label{alg:parity_estimation}
        \begin{algorithmic}[1]
        \REQUIRE Unitary $U \in \mathcal{U}(m)$, parity subset vector $\alpha \in \{0, 1\}^m$, number of input photons $k$, number of samples $N$.
        \ENSURE Estimate $\hat{E}$ of the expectation value $\langle \psi_{\text{in}} | \hat{U}^\dagger \Pi_\alpha \hat{U} | \psi_{\text{in}} \rangle$.
        
        \STATE \textbf{Pre-processing:}
        \STATE Construct diagonal matrix $D \leftarrow \operatorname{diag}(1 - 2\alpha)$.
        \STATE Compute $V = U^\dagger D U$.
        \STATE Extract the upper-left $k \times k$ submatrix: $W \leftarrow V_{[k], [k]}$.
        
        \STATE \textbf{Monte Carlo estimation:}
        \STATE Initialize $S \leftarrow 0$.
        \FOR{$n = 1$ \TO $N$}
            \STATE Sample $x \in \{-1, 1\}^k$ uniformly at random.
            \STATE Compute the product of signs: $P_x \leftarrow \prod_{i=1}^k x_i$.
            \STATE Compute the row-sum product: $R_x \leftarrow \prod_{i=1}^k \left( \sum_{j=1}^k W_{ij} x_j \right)$.
            \STATE $S \leftarrow S + P_x \cdot R_x$.
        \ENDFOR
        
        \STATE \textbf{Final Result:}
        \STATE $\hat{E} \leftarrow S / N$.
        \RETURN $\hat{E}$
        \end{algorithmic}
    \end{algorithm}

\bibliographystyle{plain}
\bibliography{refs}

\end{document}